\documentclass[11pt]{article}
\usepackage{xcolor}
\usepackage[colorlinks=true,citecolor=black,urlcolor=gray, filecolor=black, linkcolor=black]{hyperref}
\usepackage{fullpage, color}
\usepackage{amsmath,amssymb,amsfonts}
\usepackage{setspace}
\usepackage{bbm}

\allowdisplaybreaks

\newcommand{\al}{\alpha}
\newcommand{\eps}{\epsilon}

\newcommand{\Z}{\mathbb{Z}}
\newcommand{\R}{\mathbb{R}}
\newcommand{\F}{\mathbb{F}}

\newcommand{\calp}{\mathcal{P}}

\newcommand{\calf}{\mathcal{F}}
\newcommand{\calg}{\mathcal{G}}
\newcommand{\call}{\mathcal{L}}

\newcommand{\E}{\operatornamewithlimits{\mathbb{E}}}
\newcommand{\1}{{\bf 1}}
\newcommand{\supp}{\mathrm{supp}}
\newcommand{\sspan}{\mathrm{span}}

\newcommand{\ra}{\rightarrow}
\newcommand{\Var}{\mathrm{Var}}

\newcommand{\eqdef}{{\stackrel{\rm def}{=}}}

\newcommand{\poly}{{\rm {poly}}}

\newcommand{\mspan}{{\rm span}}
\newcommand{\cube}{\mathbb{F}_{2}^{n}}


\newcommand{\zo}{\{0,1\}}
\newcommand{\ocf}{OCF}
\newcommand{\ignore}[1]{}

\newtheorem{theorem}{Theorem}
\newtheorem{corollary}[theorem]{Corollary}
\newtheorem{lemma}[theorem]{Lemma}

\newtheorem{definition}[theorem]{Definition}
\newtheorem{claim}[theorem]{Claim}
\newtheorem{fact}[theorem]{Fact}

\newtheorem{question}[theorem]{Question}

\newcommand{\qed}{\hfill \ensuremath{\Box}}

\newenvironment{proof}{\noindent{\bf Proof}\hspace*{1em}}{\qed\bigskip}
\newenvironment{proof-of-lemma}[1]{\noindent{\bf Proof of Lemma #1}\hspace*{1em}}{\qed\bigskip}
\newenvironment{proof-of-theorem}[1]{\noindent{\bf Proof of Theorem #1}\hspace*{1em}}{\qed\bigskip}

\newcommand{\email}[1]{\href{mailto:#1}{{\tt #1}}}

\def\mnotes{0} 

\ifnum\mnotes=0
\newcommand{\mnote}[1]{}
\else
	\newcounter{mynotes}
	\setcounter{mynotes}{0}
	\newcommand{\mnote}[1]{\addtocounter{mynotes}{1}{{\bf !}}
	\marginpar{\scriptsize  {\arabic{mynotes}.\ {\sf \textcolor{blue}{#1}}}}}
	\fi

\newcommand{\enote}[1]{\mnote{EG:\\ #1}}
\newcommand{\abnote}[1]{\mnote{AB: #1}}
\title{Testing Odd-Cycle-Freeness in Boolean Functions}
\author{Arnab Bhattacharyya\thanks{CSAIL, MIT, Cambridge, MA,
 \email{abhatt@mit.edu}. Supported in part by NSF grants 1065125 and 0728645.}
\and Elena Grigorescu\thanks{ Georgia Tech, Atlanta, GA, \email{elena@cc.gatech.edu}.
Supported by  NSF award 1019343 to the Computing Research Association for the Computing Innovation Fellowship Program.}
\and Prasad Raghavendra\thanks{Georgia Tech, Atlanta, GA, \email{raghavendra@cc.gatech.edu}.}
\and Asaf Shapira\thanks{Georgia Tech, Atlanta, GA, \email{asafico@math.gatech.edu}. Supported in part by NSF Grant DMS-0901355.}}

\date{}

\begin{document}
\maketitle
\begin{abstract}

Call a function $f: \F_2^n \to \{0,1\}$ {\em odd-cycle-free} if  there are no $x_1,
\dots, x_k \in \F_2^n$ with $k$ an odd integer such that $f(x_1) = \cdots =
f(x_k) = 1$ and $x_1 + \cdots + x_k = 0$. We show that one can distinguish
odd-cycle-free functions from those $\eps$-far from being odd-cycle-free by making
$\poly(1/\eps)$ queries to an evaluation oracle. To obtain this result, we use connections 
between basic Fourier analysis and spectral graph theory to show that one can reduce testing
odd-cycle-freeness of Boolean functions to testing bipartiteness of dense graphs.
Our work forms part of a recent sequence of works that shows connections between
testability of properties of Boolean functions and of graph properties.

\ignore{
Our work forms part of a recent sequence of works that shows connections between
testability of properties of Boolean functions and of graph properties. The graph property
analogous to odd-cycle-freeness is bipartiteness. Even though for general monotone graph
properties, one can bound the query complexity only by a very fast-growing function of $1/\eps$,
it is known that bipartiteness can be tested using $\tilde{O}(1/\eps^2)$ queries
(Alon et al., SIDMA '02). Our work shows that a similar phenomenon takes place for
linear-invariant properties. Odd-cycle-freeness was known to be testable previously
(Bhattacharyya et al., FOCS '10) but the only bound on the query complexity was a tower of
exponentials in $1/\eps$. Here, we show that it is possible to test odd-cycle-freeness
using $\tilde{O}(1/\eps^2)$ queries.  We prove this by analyzing the eigenvalues of the
Cayley graph associated with a Boolean function.
}

We also prove that there is a {\em canonical} tester for odd-cycle-freeness making
$\poly(1/\eps)$ queries, meaning that the testing algorithm operates by picking a random
linear subspace of dimension $O(\log 1/\eps)$ and then checking if the restriction of the
function to the subspace is odd-cycle-free or not.  The test is analyzed by studying the
effect of random subspace restriction on the Fourier coefficients of a function. 
Our work implies that testing odd-cycle-freeness using a canonical tester instead of an arbitrary tester incurs no more than a polynomial blowup in the query complexity.
The question of whether a canonical tester with polynomial blowup exists for all
linear-invariant properties remains an open problem.

\ignore{

Recent developments have shown that tools and approaches that have been very successful in
designing algorithms for testing graph properties can also be used in testing properties
of Boolean functions. We continue this line of research and obtain the following results:

\begin{itemize}
\item The property of being Odd-Cycle-Free (\ocf) has been extensively studied in many models of property testing.
In the context of Boolean functions, we say that a function is \ocf\ if its support does not contain an odd set of points
$x_1,\ldots,x_{2k+1}$ satisfying $\sum_ix_i=0$. Our first result is an efficient testing algorithm for \ocf\ of Boolean functions.
This algorithm uses properties of the Fourier transform of a boolean function together with an analysis of the eigenvalues of the Cayley graph associated with a Boolean function.

\item
The above algorithm is not ``canonical'' in the sense that it does not work by sampling a set of points $S$ and asking about all the points in their span.
Motivated by the fact that testers for graph properties can be made canonical with only a small blow-up in the query complexity, we investigate the problem of turning the previous \ocf\ algorithm into a canonical one. Using more involved methods, we show that \ocf\ can be tested by sampling $O(\log(1/\epsilon))$ points and querying $f$ on their span. The general problem of proving the existence of a canonical tester for testing properties of Boolean functions with polynomial blowup remains open.
\end{itemize}}
\end{abstract}


\newcommand{\keywords}[1]{\noindent {\bf Keywords:} #1}

\keywords{property testing, Boolean functions, Fourier analysis, Cayley graphs, eigenvalues}

\section{Introduction}\label{sec:intro}

A property testing algorithm is required to distinguish objects that
satisfy a given property from objects that are ``far'' from satisfying the property.
One can trace the origins of property testing as an area of study to two distinct origins:
\cite{BLR} (and subsequently \cite{RS}) which formally investigated the testability of
linearity and other {\em properties of Boolean functions} and \cite{GGR} which
studied the testability of various {\em graph properties}. Although \cite{GGR} was
inspired by the preceding work on Boolean functions, the two directions evolved more or
less independently  in terms of the themes considered and the techniques employed.
Recently though, this has dramatically changed, and quite a few
surprising connections have emerged.
In this work, we draw more connections between these two apparently different areas and
show how ideas and tools used in the study of graph properties can be used to test certain
properties of Boolean functions.

We start with a few definitional remarks. A property of Boolean functions  is a subset
$\calp \subseteq \{ f:\{0,1\}^n\ra\{0,1\} \}$. The distance between  $f, g: \{0,1\}^n\ra
\{0,1\}$ is given by the Hamming metric $\delta(f, g)=\Pr_x[f(x)\not = g(x)],$ and the
distance from $f$ to $\calp$ is $\delta_{\calp}(f)=\min_{g \in \calp} \delta(f, g)$.  A
function $f$ is $\eps$-far from $\calp$ if $\delta_{\calp}(f)\geq \eps$.
These definitions carry over to graph properties\footnote{In this paper, when we refer to
  graph properties, we will always mean the dense graph model where the graph is
  represented by its adjacency matrix.}, where the distance to a graph property
$\calp$ is said to be $\eps$ if $\eps n^2$ edges need to be added to or removed from the
given graph on $n$ vertices in order to obtain a graph in $\calp$. A {\em tester} for
$\calp$ is a randomized algorithm which, given oracle  access to the input and a parameter
$\eps \in (0,1)$, accepts with probability at least $2/3$ when the input is in $\calp$ and
rejects with probability at least $2/3$ when it is $\eps$-far from $\calp$.
In the case of Boolean functions, the tester can query the value of the function at any
element of $\zo^n$, and in the case of graphs, it can query the adjacency matrix at any
location. The
complexity of a tester is measured by the number of queries it makes to the
oracle, and if this quantity is  independent of $n$, the property is called {\em (strongly)
testable}. A one-sided error tester should accept every object in $\calp$ with probability
$1$ and reject every object that is $\eps$-far from $\calp$  with probability $2/3$.

Our main focus in this paper is the study of the following property of Boolean functions.
\begin{definition}[Odd-Cycle-Freeness]
A function $f:\F_2^n \to \{0,1\}$ is said to be {\em odd-cycle-free (OCF)} if for all odd
$k\geq 1$, there are no $x_1, x_2, \ldots, x_{k} \in \F_2^n$ satisfying
$x_1+\cdots+x_k=0$ and $f(x_i)=1$ for all $i\in [k]$.
\end{definition}

The name ``odd-cycle-freeness'' arises from the observation that $f$ is \ocf\ if and only
if the Cayley graph\footnote{See Section~\ref{sec:edge-sample} for the precise definition.}
with the support of $f$ as its generators is free of cycles of odd length, i.e. is bipartite.
The property of bipartiteness in general graphs has been extensively studied, and nearly tight upper
and lower bounds are known for its query complexity \cite{GGR, AlonK02,BT04, KaufmanKR04}.
In this work, we show that odd-cycle-freeness for Boolean functions
is testable with comparable query complexity and moreover, using tests that are very
similar to the ones for graph bipartiteness.

Odd-cycle-freeness can also be described in a more algebraic way. As observed in
Section \ref{sec:charac}, given a function $f: \F_2^n \to \zo$ with  density $\rho \eqdef
\E_x[f(x)]$, the distance of $f$ to \ocf\ is exactly $\frac{1}{2}(\rho + \min_\alpha
\hat{f}(\alpha))$.  So, a Boolean function's distance to \ocf\ is directly connected to
the (signed) value of its smallest Fourier coefficient. This link
proves crucial in our analysis of tests for \ocf.

Our work is part of a larger program to understand the structure of testable properties
of Boolean functions. We explain this perspective next.

\paragraph{Common themes in testing}

 A leading question in the search for common unifying themes in property testing has been
 that of discovering necessary and sufficient conditions for strong testability.
Kaufman and Sudan~\cite{KS} suggest that {\em linear invariance} is a natural property of
boolean functions and play an important role in testing.
%
Formally, a property\footnote{Henceforth, we will identify $\{0,1\}^n$ with the vector
  space $\F_2^n$.} $\calp \subseteq \{ f:\F_2^n\ra\{0,1\} \}$ is said to be
linear-invariant if for any $f\in \calp$, it is also the case that $f\circ L \in \calp$,
for any $\F_2$-linear transformation $L: \F_2^n\ra \F_2^n$.
Some notable examples of properties that were shown to be testable and which are invariant under linear transformations of the domain include  linear functions~\cite{BLR},  low degree polynomials~\cite{AKKLR}, and functions with low Fourier dimensionality or sparsity~\cite{GOSSW}.

 A  general class of linear invariant families can be described in terms of forbidden
 patterns. The first instance of this perspective appeared in the work of
 Green~\cite{Green05} in testing if a Boolean property is {\em triangle-free}. Formally, $f$ is
 triangle-free if it is free from the pattern $\langle f(x), f(y), f(x+y) \rangle =\langle
 1^3\rangle,$ for any $x, y\in \F_2^n$.
\ignore{
This can be equivalently formulated as saying that
 for any linear map $L:\F_2^n\ra \F_2^n$ $f$ is free from the pattern $\langle f(L(e_1)),
 f(L(e_2)), f(L(e_1+e_2))\rangle=\langle1^3\rangle$, where $e_1, e_2$ are the first two
 standard basis vectors over $\F_2^n$. Hence, if $\calp$ is a linear invariant family
 characterized by a forbidden pattern,  the forbidden pattern can be encoded by the linear
 dependencies between the elements involved, while the actual identities of these elements
 is not relevant. }
More generally, a function is said to be free from solutions to the linear equation $x_1 +
\cdots + x_k = 0$ if there are no $x=(x_1, \ldots, x_k)\in (\F_2^n)^k$ satisfying $x_1 + \cdots +
x_k=0$ and $f(x_i)=1$ for all $i\in [k]$.
Pushing this generalization further, for a matrix $M\in \F_2^{m\times k}$ and $\sigma\in
\{0,1\}^k$,  we say that $f$ is $(M, \sigma)$-free if there is no $x=(x_1, \ldots,
x_k)\in (\F_2^n)^k$ such that $Mx=0$ and $f(x_i)=\sigma_i$ for all $i\in [n].$ This
corresponds to freeness from solutions to a system of linear equations. When
$\sigma=1^k$, notice that if $f\in \calp$, then any function obtained from $f$ by removing
elements in the support of $f$ also must belong to $\calp$, and in this case $\calp$ is
said to be {\em monotone}.

Green~\cite{Green05} proved that $(M,  1^k)$-freeness is testable with one sided error
when  $M$ is a rank $1$ matrix. Kr\'al{'}, Serra and Vena~\cite{KSV08b} and
Shapira~\cite{Shap09} showed that this is true regardless of the matrix $M$. The setting
when $\sigma\not = 1^k$ was introduced in~\cite{BCSX09}, where it is shown that
$(M,\sigma)$-freeness is testable for any $\sigma$ as long as $M$ is of rank one.
The case of $M$ being rank $1$ was fully resolved by Bhattacharyya, Grigorescu and Shapira
~\cite{BGS10} who
showed that
 any
(possibly infinite) intersection of such properties is also testable. Now, note that
odd-cycle-freeness is an example of such a property; it is the intersection of
$(C_k,1^k)$-freeness for all odd $k \geq 1$, where $C_k$ is the row vector $[1~1~ \cdots~ 1]$
of size $k$.  We next state this result formally. 
\begin{theorem}[\cite{BGS10}]\label{thm:testable}
There exists a function $f: (0,1)\to \Z^+$ such that the following is true. For any $\eps
>0$, there is a one-sided tester with query complexity $f(\eps)$ that distinguishes OCF
functions from functions $\eps$-far from OCF.
\end{theorem}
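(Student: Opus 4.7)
The plan is to deduce Theorem~\ref{thm:testable} by combining two pre-existing ingredients: the testability of $(M, 1^k)$-freeness when $M$ is a single rank-$1$ matrix (Kr\'al'-Serra-Vena~\cite{KSV08b}, Shapira~\cite{Shap09}), together with a ``reduction-to-finite-intersection'' argument that bounds the distance from \ocf\ in terms of the distances to finitely many $(C_k, 1^k)$-free properties.

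First, observe that \ocf\ decomposes as the infinite intersection $\bigcap_{k \text{ odd}} \calp_k$, where $\calp_k$ denotes $(C_k, 1^k)$-freeness and $C_k = [1~1~\cdots~1] \in \F_2^{1\times k}$. Each $\calp_k$ is a monotone linear-invariant property defined by a single linear equation, and hence by the rank-$1$ result it is testable with one-sided error via the natural subtester: sample $k-1$ uniformly random points $x_1, \ldots, x_{k-1} \in \F_2^n$, set $x_k = x_1 + \cdots + x_{k-1}$, and reject iff $f(x_1) = \cdots = f(x_k) = 1$. The analysis of this subtester invokes an arithmetic removal lemma for a single linear equation, which ensures that being $\eps$-far from $\calp_k$ implies at least a $\tau_k(\eps) > 0$ fraction of violating $k$-tuples. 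The core remaining step is to prove a ``truncation lemma'' of the following form: for every $\eps > 0$ there exist a finite threshold $K = K(\eps)$ and a parameter $\delta = \delta(\eps) > 0$ such that any $f$ which is $\delta$-close to each $\calp_k$ with $k \leq K$ is necessarily $\eps$-close to \ocf. Granted this, the overall tester runs the $(C_k, 1^k)$-freeness subtester for each odd $k \leq K(\eps)$ with error parameter $\delta(\eps)$ and failure probability $1/(3K(\eps))$, and accepts iff every subtest accepts.

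The main obstacle is the truncation lemma itself. The difficulty is that different odd cycle lengths may a priori demand mutually incompatible local modifications: $f$ may be $\delta$-close to each individual $\calp_k$ via erasure of a different small subset of its support, yet no single small modification may simultaneously repair all $k$. The standard way to handle this is an iterative surgery argument: starting from a candidate repair that eliminates all odd cycles of length at most $k$, adjoin the violators witnessed at length $k+2$ to obtain a new candidate, and use a density-increment/compactness argument to ensure that the process terminates within $K(\eps)$ rounds, where $K(\eps)$ depends only on $\eps$ and not on $n$. The query complexity obtained this way is necessarily a tower function in $1/\eps$; replacing this tower by $\poly(1/\eps)$ is precisely the contribution of the present paper.
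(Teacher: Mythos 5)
Your decomposition of \ocf\ as $\bigcap_{k\ \mathrm{odd}}\calp_k$ (with $\calp_k$ denoting $(C_k,1^k)$-freeness) and the per-length subtesters analyzed via Green's removal lemma follow the route of the cited source \cite{BGS10}, not of the present paper, whose own (stronger, $\poly(1/\eps)$-query) argument passes to the Cayley graph $\calg(f)$, identifies its eigenvalues with the Fourier coefficients of $f$, and reduces to bipartiteness testing of dense graphs. The problem is that your argument is incomplete exactly where all the content of the theorem lives: the truncation lemma. Two concrete issues. First, the difficulty you flag --- that different lengths might demand mutually incompatible repairs --- is actually absent for this family: padding a $k$-cycle with an even number of copies of $x_1$ turns it into a $k'$-cycle for every odd $k'\ge k$, so $\calp_{k'}\subseteq\calp_k$ and $\bigcap_{k\le K,\,k\ \mathrm{odd}}\calp_k=\calp_K$; a single repair for $\calp_K$ handles all shorter lengths simultaneously. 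The genuine difficulty concerns only the lengths $k>K$: why does a function with no odd cycles of length at most $K(\eps)$ have to be \emph{close} to genuinely odd-cycle-free? Second, the iterative surgery you propose for this has no termination guarantee. A density-increment argument requires each round to delete at least some fixed fraction $c(\eps)$ of the domain, but nothing in your setup guarantees that a function which still contains (possibly very long) odd cycles has a positive density of violations at any \emph{bounded} length --- that assertion is itself a form of the truncation lemma, so the sketch is circular as stated. As written, the proof does not go through.

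For completeness: the truncation lemma is true, and the cheapest way to prove it is a Fourier computation in the spirit of this paper. The normalized number of odd $k$-cycles in $g$ equals $\sum_\alpha \hat{g}(\alpha)^k$, and if $g$ is $\eps'$-far from \ocf\ then Lemma~\ref{lem:ocf2} gives $\hat{g}(\alpha)\ge-\rho+2\eps'$ for all $\alpha$, whence $\sum_\alpha \hat{g}(\alpha)^k\ \ge\ \rho^k-(\rho-2\eps')^{k-2}\sum_\alpha\hat{g}(\alpha)^2\ =\ \rho^k-(\rho-2\eps')^{k-2}\rho\ >\ 0$ once the odd $k$ exceeds $2+\rho\ln(1/\rho)/2\eps'=O(1/\eps')$. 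So a function with no odd cycles of length at most $K$ is $O(1/K)$-close to \ocf, which yields your lemma with $K(\eps)=O(1/\eps)$ (the paper's Theorem~\ref{thm:ocf2} even gives $O(\log(1/\eps))$). Note that this counting bound already exhibits a positive density of $k$-cycles for a single bounded odd $k$, so it makes the removal-lemma layer of your construction unnecessary; but some such quantitative input is indispensable, and your proposal currently supplies none.
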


In fact, odd-cycle-freeness is not just an ``example'' of a property shown to be testable
by \cite{BGS10}: {\em any} monotone
property that can be expressed as freeness from solutions to an infinite set of linear equations
is equivalent to the odd-cycle-freeness property (see Section \ref{sec:conclusion} for the
short argument). Thus, we view the study of odd-cycle-freeness as an important step
towards understanding the testability of monotone linear-invariant properties.

\ignore{As observed in  \cite{BGNX10},  \cite{BGS10} do indeed prove testability of an infinite collection of properties that were not already known to be testable.
In addition, \cite{BGS10}  propose a conjectured characterization of all  linear-invariant
properties that are testable with one-sided error,  stating that they must be described by
the intersection of  a possibly infinite collection of $(M, \sigma)$-free
properties. These  correspond in fact to those  properties that are preserved under
restrictions to subspaces, and  were called {\em subspace hereditary}.}

 Surprisingly, a similar picture   has been staged in the world of testable properties in the {\em dense graph model.}
Just as for Boolean functions,  triangle freeness, which was shown (implicitly) by
Ruzsa and Szem\'eredi~\cite{RSzeme78} to be testable,
 brought up a wealthy perspective in the area. It was followed by exciting results in
 testing $H$-freeness \cite{AlonDLRY94} and induced $H$-freeness~\cite{AFKS} which are
 somewhat analogous to the results on testing monotone and non-monotone properties
 of Boolean functions. This direction culminated with a nearly complete characterization
 of all  properties that are testable with one-sided error and constant number of
 queries~\cite{AS08}.


The proofs in \cite{BGS10} draw heavily on the ideas used in
characterizing general graph properties, such as the Szemer\'edi regularity lemma and
Ramsey theory, and just like in that case,
 the query complexity bounds obtained are a tower of exponentials.  Thus, the bound
 obtained in \cite{BGS10} for $f(\eps)$ in Theorem \ref{thm:testable} above is an
 embarrassingly weak tower of exponentials.  This brings up the question
 of characterizing the class of linear-invariant properties which can be tested with
 $\poly(1/\eps)$ queries.  In this work, we show that odd-cycle-freeness belongs to this class of
properties by the way of two different proofs, each with its own message.

\subsection{The Edge-Sampling Test}\label{sec:edge-descrip}

Our first approach for testing \ocf\ relies on reducing testing \ocf\ in Boolean functions to testing bipartiteness of the Cayley graph associated with the function. More precisely,
we will show that the following algorithm is a tester for \ocf .

\begin{center}
\fbox{
\parbox{\columnwidth} {
\underline{{\sf Edge-sampling test} (Input: oracle access to
  $f: \cube \to \zo$)}
\begin{enumerate}
\addtolength{\itemsep}{-2mm}
\item Uniformly pick $\al_1, \ldots, \al_{k} \in \F_2^n$, where $k=\tilde{O}({1}/{\eps})$. Let $G=\{\al_i-\al_j~:~ i < j\}$.
\item Accept if and only if the restriction of $f$ to $G$ is odd-cycle-free.
\end{enumerate}
}}
\end{center}


\begin{theorem}\label{thm:ocf1}
The edge-sampling test is a one-sided tester for odd-cycle-freeness with query complexity
$\tilde{O}(1/\eps^2)$.
\end{theorem}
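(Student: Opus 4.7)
The plan is to view the edge-sampling test as exactly the Alon--Krivelevich bipartiteness test applied to the Cayley graph $\Gamma := \mathrm{Cay}(\F_2^n,\supp(f))$, and to translate the hypothesis ``$f$ is $\eps$-far from \ocf'' into the hypothesis ``$\Gamma$ is $\Omega(\eps)$-far from bipartite.'' Granted that translation, the known $\tilde{O}(1/\eps^2)$ query bound for bipartiteness testing in the dense graph model yields the theorem.

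I would start by making the correspondence between the test and bipartiteness testing explicit. A uniformly random $k$-tuple $(\al_1,\dots,\al_k)$ induces a subgraph of $\Gamma$ on $\{\al_1,\dots,\al_k\}$ with edge set $\{\{i,j\} : i<j,\ \al_i-\al_j\in\supp(f)\}$. An odd cycle $i_1,\dots,i_{2\ell+1},i_1$ in this induced subgraph yields $2\ell+1$ elements $\al_{i_t}-\al_{i_{t+1}}\in G\cap\supp(f)$ summing to zero; conversely, any odd collection of elements of $G\cap\supp(f)$ summing to zero produces an even-degree multigraph on $[k]$ with an odd number of edges, whose cycle decomposition must contain an odd cycle in the induced subgraph. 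So the test's acceptance predicate is precisely bipartiteness of the induced subgraph, and completeness is immediate because any induced subgraph of a bipartite graph is bipartite.

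For soundness, suppose $\delta_{\mathrm{OCF}}(f)\ge\eps$. By the Fourier characterization in Section~\ref{sec:charac}, $\rho + \min_\al\hat f(\al) \ge 2\eps$, so $\min_\al\hat f(\al)\ge 2\eps-\rho$. The adjacency spectrum of $\Gamma$ is $\{2^n\hat f(\al) : \al \in \F_2^n\}$, because the eigenvalue attached to the character $\chi_\al$ equals $\sum_{s\in\supp(f)}(-1)^{\al\cdot s}=2^n\hat f(\al)$; and $\Gamma$ is $d$-regular with $d=2^n\rho$, giving $d+\la_{\min}(\Gamma)\ge 2\eps\cdot 2^n$. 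Combining this with the standard spectral max-cut bound $\mathrm{mc}(\Gamma)\le |E(\Gamma)|/2 - N\la_{\min}(\Gamma)/4$ (where $N=2^n$) shows that turning $\Gamma$ bipartite requires deleting at least $N(d+\la_{\min}(\Gamma))/4 \ge \eps\cdot 2^{2n}/2$ edges, i.e., $\Gamma$ is $\eps/2$-far from bipartite in the dense graph model. Invoking the Alon--Krivelevich bipartiteness tester \cite{AlonK02} with proximity parameter $\eps/2$ then yields rejection with probability at least $2/3$ after inspecting an induced subgraph on $k=\tilde{O}(1/\eps)$ uniformly random vertices, and the total query complexity is $\binom{k}{2} = \tilde{O}(1/\eps^2)$.

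The key step, and the one that ties the Boolean-function and graph-theoretic pictures together, is the eigenvalue-to-distance translation in the previous paragraph; once we have it, the rest of the argument is essentially black-box. The only minor technical annoyance is the possibility of collisions $\al_i = \al_j$ or of two distinct pairs producing the same element $\al_i - \al_j$ in $G$, but their total probability is $O(k^2/2^n)$ and can be absorbed into the constant in the success probability of the test.
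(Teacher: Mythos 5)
Your proposal is correct and follows essentially the same route as the paper: the Fourier characterization of distance to \ocf, the identification of the Cayley spectrum with $\{2^n\hat f(\alpha)\}$, a spectral bound showing $\calg(f)$ is $\eps/2$-far from bipartite (your max-cut inequality $\mathrm{mc}(\Gamma)\le |E|/2 - N\lambda_{\min}/4$ is just a compact repackaging of the paper's Lemma~\ref{lem:vertices-edges} and Corollary~\ref{cor:dense-halves}), and a black-box invocation of Alon--Krivelevich. The one loose point is your claim that the acceptance predicate is \emph{exactly} bipartiteness of the induced subgraph --- the ``even-degree multigraph'' step does not quite work since the $\alpha_v$ need not be independent --- but this direction is never needed: completeness follows because \ocf\ is closed under restriction of the support, and soundness only uses the (correct) direction that an odd cycle in the induced subgraph yields an odd-cycle witness for $f$.
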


\ignore{Notice that \ocf\ is a monotone property characterized by an infinite number of forbidden equations, and hence by the results of~\cite{BGS10} it is testable with a huge number of queries. It is easy to see that it is also closed under restrictions to subspaces, namely it is subspace hereditary. The above theorem thus provides a much better bound on its query complexity.
The bound in Theorem \ref{thm:ocf1} could be compared with the trivial lower bound of $1/\eps$ queries necessary to even observe an element in the support of the given function. \enote{hope for better bounds}
}
For a function $f: \F_2^n \to \{0,1\}$, define the Cayley graph $\calg(f)=(V, E)$ to be
the graph with vertex set $V=\F_2^n$ and edge set $E=\{(x, y)~|~ f(x-y) = 1 \}$. It's not hard to show that
$f$ is far from odd-cycle-free if and only if $\calg(f)$ is far from any bipartite {\em
  Cayley} graph. The crux in the analysis of the above testing algorithm involves showing
that if $f$ is
$\eps$-far from being odd-cycle-free, then its Cayley graph $\calg(f)$ is $\epsilon/2$-far from {\em any} bipartite graph.
The proof relies on the well-known fact that the eigenvalues of the adjacency matrix of
$\calg(f)$ are exactly the Fourier coefficients of $f$, and uses spectral techniques to
analyze the distance to bipartiteness. Finally, the test emulates the test for
bipartiteness of~\cite{AlonK02}.

In fact, we prove something stronger: the distance of $\calg(f)$ from bipartiteness is
{\em exactly} half the distance of $f$ from \ocf. Using the fact \cite{GGR,ADKK} that one can
estimate a graph's distance from bipartiteness using $\poly(1/\eps)$ queries to within
additive error $\eps$, it follows that one can estimate the distance of $f$ from \ocf.
\begin{theorem}\label{thm:distest}
There exists an algorithm that, given oracle access to a function $f: \F_2^n \to \zo$ and
a parameter $\eps \in (0,1)$, makes $\poly(1/\eps)$ queries and returns the distance of
$f$ to \ocf\ to within an error of $\pm \eps$.  The same holds for approximating
$\min_\alpha \hat{f}(\alpha)$ to within an error of $\pm \eps$.
\end{theorem}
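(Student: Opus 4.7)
The plan is to reduce the estimation problem to distance estimation for bipartiteness in the dense graph model, invoking the two ingredients flagged earlier in the excerpt: (i) the identity $\dis_{\mathrm{OCF}}(f) = \frac{1}{2}\bigl(\rho + \min_\alpha \hat{f}(\alpha)\bigr)$, where $\rho = \E_x[f(x)]$, and (ii) the stronger structural statement that $\dis_{\mathrm{OCF}}(f)$ is exactly twice the distance of $\calg(f)$ from the family of all bipartite graphs. Given (ii), estimating $\dis_{\mathrm{OCF}}(f)$ is equivalent to estimating $\dis_{\mathrm{bip}}(\calg(f))$, and we already know from \cite{GGR,ADKK} that the latter can be approximated within additive error $\eps$ using $\poly(1/\eps)$ queries to the adjacency matrix of an arbitrary dense graph.

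The first step is to observe that we can simulate adjacency-matrix access to $\calg(f)$ using one $f$-query per adjacency query: the pair $(x,y)$ is an edge of $\calg(f)$ iff $f(x-y)=1$. Hence any algorithm that queries $\calg(f)$ at $Q$ locations can be executed with $Q$ evaluations of $f$, and crucially $Q$ is independent of the number of vertices $N=2^n$. Running the bipartiteness distance estimator of \cite{GGR,ADKK} with accuracy parameter $\eps/4$ and success probability $5/6$ therefore yields, with $\poly(1/\eps)$ queries to $f$, a number $\hat d$ with $|\hat d - \dis_{\mathrm{bip}}(\calg(f))|\le \eps/4$. By the factor-of-two identity, $2\hat d$ approximates $\dis_{\mathrm{OCF}}(f)$ within $\pm \eps/2$, giving the first claim (with room to spare).

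For the second claim, combine this with an independent estimate of the density $\rho = \E_x[f(x)]$: draw $O(1/\eps^2)$ uniformly random $x\in \F_2^n$, query $f$ at each, and apply a Chernoff bound to get $\hat \rho$ with $|\hat\rho - \rho| \le \eps/4$ with probability $\ge 5/6$. Using identity~(i) rearranged as $\min_\alpha \hat{f}(\alpha) = 2\,\dis_{\mathrm{OCF}}(f) - \rho$, the quantity $2\hat d' - \hat\rho$ (where $\hat d'$ is obtained as above with accuracy $\eps/4$) lies within $\pm \eps$ of $\min_\alpha \hat{f}(\alpha)$, again with $\poly(1/\eps)$ total queries by a union bound.

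The only conceptual content is the reduction encoded by statements (i) and (ii); both are already established in the paper's analysis of the edge-sampling test, so the remaining work here is purely the bookkeeping for error propagation and the Cayley-to-$f$ query simulation. I therefore do not expect any substantive obstacle — the main care required is to set the internal accuracy parameters (and the bipartite-distance estimator's success probability) so that a single union bound yields the desired $\pm\eps$ guarantee with probability at least $2/3$.
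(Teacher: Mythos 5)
Your proposal is correct and takes essentially the same route as the paper's proof: it combines the exact factor-of-two relation between the distance of $f$ to \ocf\ and the bipartiteness distance of $\calg(f)$ (Corollary~\ref{lem:main}) with the $\poly(1/\eps)$-query bipartiteness distance estimator of \cite{ADKK}, simulating each adjacency query by a single evaluation $f(x-y)$, and then recovers $\min_\alpha \hat{f}(\alpha)$ from item (c) of Lemma~\ref{lem:ocf2} together with a sampled estimate of $\rho$. The only difference is that you spell out the error-propagation bookkeeping, which the paper leaves implicit.
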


The second statement is because of the connection between the distance to \ocf\ and Fourier
coefficients mentioned earlier. Using the above, we also obtain a $\poly(1/\eps)$-query algorithm to approximate
distance to linearity that seems different from previously known ones \cite{BLR, PRR}.

%
%

\subsection{The Subspace Restriction Test}\label{sec:canonical}

Call a tester for a graph property $\calp$ {\em canonical} if it works by picking a set of
vertices uniformly at random, querying all the edges spanned by these vertices and
checking whether the induced graph satisfies $\calp$ or
not. \cite{Alon02, GoldreichTrevisan} showed that if $\calp$ is a {\em hereditary} graph
property (i.e., if a graph $G$ satisfies $\calp$, then so does every
induced subgraph of $G$), then $\calp$ can be in fact tested using a canonical tester
with only a quadratic blowup in the query complexity.  Moreover, for many natural
hereditary graph properties, and in particular, for the property of graph bipartiteness,
there is asymptotically no loss in using a canonical tester.
The existence of a {canonical tester} also makes convenient proving lower bounds for
hereditary graph properties. It is thus natural to ask if a similar  theorem can be proved
in the context of testing properties of Boolean functions.

Suppose $\calp$ is a {\em subspace-hereditary} linear-invariant property of Boolean
functions, meaning that if a function $f: \F_2^n \to \{0,1\}$ satisfies $\calp$, then so
does $f$ restricted to any linear subspace of the domain. Subspace-hereditary properties
are especially interesting because they include most natural linear-invariant properties
and are conjectured in \cite{BGS10} to be exactly the
class of one-sided testable linear-invariant properties (modulo some technicalities).
Now, just as a canonical tester
for a hereditary graph property works by sampling a set of vertices $S$ and querying all
edges induced in $S$, one defines a canonical tester for a subspace-hereditary
property $\calp$ to be the algorithm that, on input $\eps \in (0,1)$ and
oracle access to $f: \F_2^n \to \zo$, chooses uniformly at random a linear subspace $H
\leq \F_2^n$ of dimension $d(\eps,n)$ (for some fixed function $d: [0,1] \times \Z^+ \to
\Z^+$) and accepts if and only if $f$ restricted to $H$
satisfies $\calp$.  The query complexity of the canonical tester is obviously
$2^{d(\eps,n)}$, the size of the subspace inspected by the tester. It is shown in
\cite{BGS10}, using ideas similar to those in \cite{Alon02, GoldreichTrevisan}, that any
tester for a subspace-hereditary linear-invariant property can be converted to be of
canonical form, but at the expense of
an exponential blowup in the query complexity. The question that arises then is whether,
instead of an exponential blowup, only a polynomial blowup in the query complexity is
always possible.

\begin{question}\label{ques:canpoly}
Given a subspace-hereditary\ignore{\footnote{This question can be asked for arbitrary properties
  but then the formulation becomes a bit more cumbersone. We defer this more detailed
  discussion to the full version. The reader can consult
\cite{GoldreichTrevisan} for the proper way to define a canonical tester for arbitrary
properties.}}\abnote{I removed the footnote about non-subspace-hereditary properties. Do
we really want to talk about them here?} property $\calp$ that can be tested with $q$
queries, is there always a canonical tester of complexity $\poly(q)$?
\end{question}

This seems to be a hard question in general.  However, in this work, we show that for
the property of odd-cycle-freeness, the answer to Question \ref{ques:canpoly} is
affirmative. (Note that the edge-sampling test is not canonical.)
 \begin{center}
\fbox{
\parbox{\columnwidth} {
\underline{{\sf Subspace restriction test} (Input: oracle access to
  $f: \cube \to \zo$)}
\begin{enumerate}
\addtolength{\itemsep}{-2mm}
\item Uniformly pick $\al_1, \ldots, \al_{k} \in \F_2^n$, where $k=O(\log \frac{1}{\eps})$. Let $H$
be the linear subspace spanned by $\al_1, \ldots, \al_{k}$.
\item Accept if and only if the restriction of $f$ to $H$ is odd-cycle-free.
\end{enumerate}
}}
\end{center}

\begin{theorem}\label{thm:ocf2}
The subspace restriction test is a one-sided tester for odd-cycle-freeness with query complexity
$O({1}/{\eps^{20}})$.
\end{theorem}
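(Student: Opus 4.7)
The plan is to establish soundness: assuming $f$ is $\eps$-far from OCF, I need to show $f|_H$ is not OCF with probability at least $2/3$ over the random choice of $v_1, \ldots, v_k$. Completeness is immediate because odd-cycle-freeness is subspace-hereditary: any odd linear relation among elements of $H \cap \supp(f)$ lifts to an odd linear relation in $\F_2^n$.

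First I would set up the Fourier framework. Identifying $H$ with $\F_2^k$ via its basis, I invoke the characterization from Section~\ref{sec:charac}: $f$ is $\eps$-far from OCF iff $\hat f(\gamma) + \rho \geq 2\eps$ for every $\gamma \in \F_2^n$, where $\rho = \E f$. Applied to $f|_H$, the same characterization says $f|_H$ is OCF iff there exists $\beta \in \F_2^k$ with $\widehat{f|_H}(\beta) = -\widehat{f|_H}(0)$; combinatorially, iff $\supp(f|_H)$ is empty or is contained in the affine hyperplane $L_\beta = \{c \in \F_2^k : \beta \cdot c = 1\}$ for some $\beta \neq 0$. The key identity linking the two transforms is
$$\widehat{f|_H}(\beta) \;=\; \sum_{\gamma \in \F_2^n \,:\, \gamma \cdot v_i = \beta_i \, \forall i} \hat f(\gamma),$$
expressing each Fourier coefficient of $f|_H$ as a sum of $\hat f$ over a random coset of $H^\perp$.

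I would then union-bound $\Pr[f|_H \text{ OCF}] \leq \sum_{\beta \in \F_2^k} \Pr[\supp(f|_H) \subseteq L_\beta]$. For each $\beta \neq 0$, the event $\supp(f|_H) \subseteq L_\beta$ is equivalent to $f$ vanishing on $H^{(\beta)} = \{\sum c_i v_i : \beta \cdot c = 0\}$, which by a change of basis is a uniformly random $(k-1)$-dimensional linear subspace of $\F_2^n$. I would estimate the vanishing probability $\E\bigl[\prod_{x \in H^{(\beta)}}(1 - f(x))\bigr]$ by expanding in the Fourier basis of $f$, obtaining a sum indexed by tuples $(\gamma^{(\omega)})_{\omega \in \F_2^{k-1}}$ that satisfy the $k-1$ linear constraints $\sum_{\omega \,:\, \omega_i = 1} \gamma^{(\omega)} = 0$. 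The dominant all-zero tuple contributes $(1-\rho)^{2^{k-1}} \leq e^{-\eps \cdot 2^k}$, which is negligible for $2^k$ polynomial in $1/\eps$; the remaining contributions are controlled using $\sum_\gamma \hat f(\gamma)^2 = \rho$ together with the lower bound $\hat f(\gamma) \geq 2\eps - \rho$ from the $\eps$-farness assumption, which rules out a single large negative Fourier coefficient.

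The main obstacle is making this Fourier expansion tight enough for the union bound to close. A pure pairwise-independence argument (Chebyshev on $|H^{(\beta)} \cap \supp f|$) gives only $\Pr[f \equiv 0 \text{ on } H^{(\beta)}] \leq O(1/(\rho \cdot 2^{k-1})) = O(\eps^{19})$, which multiplied by $2^k \sim 1/\eps^{20}$ yields the useless bound $O(1/\eps)$. The sharper Fourier bound sketched above exploits the full structural assumption—not merely $\rho \geq 2\eps$ but the global lower bound $\hat f(\gamma) \geq 2\eps - \rho$ on every Fourier coefficient—yielding per-$\beta$ probabilities of order $\eps^{20 + c}$ for some $c > 0$, small enough to absorb the $2^k$ union-bound factor and push $\Pr[f|_H \text{ OCF}]$ below $1/3$.
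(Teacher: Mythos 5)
Your completeness argument and your Fourier setup (the identity $\widehat{f|_H}(\beta)=\sum_{\gamma:\gamma\cdot v_i=\beta_i\,\forall i}\hat f(\gamma)$, and the reduction via Lemma~\ref{lem:ocf2} / Claim~\ref{cl:ocf1} to showing no restricted coefficient reaches $-\rho_H$) are correct and match the paper's. But the soundness argument has a genuine gap at exactly the point where all the difficulty sits. You propose to union-bound over the $2^k$ characters $\beta$ and to show $\Pr[\supp(f|_H)\subseteq L_\beta]=O(\eps^{20+c})$ for each $\beta$ by expanding $\E\bigl[\prod_{x\in H^{(\beta)}}(1-f(x))\bigr]$ in the Fourier basis. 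The all-zero tuple is indeed negligible, but the remaining sum ranges over the constraint-satisfying tuples $(\gamma^{(\omega)})_{\omega\in\F_2^{k-1}}$ with arbitrarily many nonzero entries; this is in effect a $2^{k-1}$-th moment computation, and you give no argument for why it is controlled by $\sum_\gamma\hat f(\gamma)^2=\rho$ together with the one-sided bound $\hat f(\gamma)\geq 2\eps-\rho$. Since you yourself note that the second moment falls short by a factor of $1/\eps$, the burden is entirely on this step, and it is asserted rather than proved. A further structural worry: the union bound double-counts badly whenever the restricted support is low-dimensional --- the single event $\{f|_H\equiv 0\}$ alone is contained in every $E_\beta$ and hence contributes $(2^k-1)\Pr[f|_H\equiv 0]$ to your sum, so you already need $\Pr[f|_H\equiv 0]=o(2^{-k})$, an anti-concentration statement that Chebyshev (which gives only $O(1/(\rho h))=O(\eps^{19})$) cannot deliver and that you have not established.

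The paper circumvents precisely these two obstacles by never union-bounding over $\beta$. It splits the spectrum of $f$ into large coefficients ($|\hat f(\alpha)|\geq\gamma$, of which there are at most $1/\gamma^2$ by Parseval) and small ones. The projections of the large coefficients are controlled individually by a Chebyshev bound (Lemma~\ref{lem:coeffdev}), so each stays within $o(\eps)$ of its original value and hence above $-\rho_H$. For the small coefficients it shows that the fourth moment $A=\sum_\alpha\hat f^4(\alpha)$ is approximately preserved under restriction (Lemma~\ref{lem:momentdev}); subtracting the large coefficients' contribution yields $\max_{\beta\in S_H}\hat f_H^4(\beta)\leq\sum_{\beta\in S_H}\hat f_H^4(\beta)\lesssim\gamma^2$, so no small coefficient's projection can reach magnitude $\rho_H\approx\rho\geq\eps$ once $\gamma=\eps^2/100$. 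Replacing the union bound by the inequality $\max_\beta(\cdot)^4\leq\sum_\beta(\cdot)^4$ is the key idea your proposal is missing; without it (or a worked-out substitute for your high-moment expansion), the proof does not close.
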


The analysis of the subspace-restriction test relies on a Fourier analytic argument.
One can easily see that the test accepts every OCF function.
The main insight is that certain properties of the Fourier spectrum of a function that is
$\eps$-far from being \ocf\ are preserved under random restrictions to small subspaces.

Note that Theorem \ref{thm:ocf2} implies the combinatorial fact that for any function $f$ that is $\eps$-far from
\ocf, there must exist a short witness to this fact. That is, there must exist
$x_1, \dots, x_k \in \supp(f)$ with $x_1 + \cdots + x_k = 0$ and $k = {O}(\log 1/\eps)$
an odd integer.  In fact, Theorem \ref{thm:ocf2} asserts that there must exist many such
witnesses, but a priori, it is not clear that even one such witness exists. This is in
contrast to properties such as triangle-freeness studied in \cite{Green05},
where witnesses to violations of triangle-freeness are, by definition, short, and for
testability, one ``only'' needs to show that there exist many such witnesses.

\ignore{
\subsection{Estimating the Distance to OCF}\label{sec:dist}

We show that the edge-sampling test can also be used to estimate the distance to
odd-cycle-freeness using a constant number of queries. Because of the characterization in
Section \ref{sec:charac} of  \ocf\ using Fourier coefficients, estimating the distance
to \ocf\ is equivalent to estimating the smallest Fourier coefficient (in value, not
magnitude) of a function upto a prescribed additive error.

We obtain the distance estimator by showing that for
any function $f: \F_2^n \to \zo$, the Cayley graph $\mathcal{G}(f)$, in addition to being
$\eps/2$-far from bipartite when $f$ is $\eps$-far from bipartite, is also $\eps/2$-close
to bipartite when $f$ is $\eps$-close to bipartite. Therefore, we can use known
$\poly(1/\eps)$-query algorithms \cite{ADKK} to estimate distance to bipartiteness upto
additive error $\eps$ in order to estimate distance to OCF upto additive error
$\eps$.

As we observe at the end of Section \ref{sec:edge-sample}, a different sublinear-query
distance estimator for \ocf\ can be obtained by using the celebrated
Goldreich-Levin~\cite{GoldreichL89} algorithm. However, this algorithm has query
complexity $\tilde{O}(n)$, instead of a constant.

\ignore{
allows a {\em tolerant} test for \ocf\ based on the celebrated
Goldreich-Levin~\cite{GoldereichL89} algorithm.

The celebrated Goldreich-Levin~\cite{GoldreichL89} algorithm easily yields a
two-sided test for \ocf: obtain  all the Fourier coefficients of large magnitude,  and
accept\footnote{The last   step can be achieved by explicitly estimating each of the constantly many Fourier
  coefficients of large magnitude. This step introduces the two-sided error.} only if there is
some coefficient whose estimate is smaller than $-\rho+2\eps$.
This algorithm has query complexity $O(n \log n)$. It can be used to  estimate the
distance  to odd-cycle-freeness. Indeed, the spectral characterization of \ocf\ allows one
to obtain a fully tolerant tester (as defined in~\cite{PRR}) using $\tilde{O}(n)$ queries
and then a distance estimator with similar parameters.
}

}

\subsection{Organization}

The rest of the paper is organized as follows. In Section \ref{sec:charac} we show that one can
relate the distance of $f$ from \ocf\ to the Fourier expansion of $f$. In Section \ref{sec:edge-sample} we use this relation
together with some results from spectral graph theory in order to analyze the edge-sampling test and thus prove
Theorem \ref{thm:ocf1}. We also show in this section that a strengthening of the analysis
for the edge-sampling test yields a distance estimator.
In Section \ref{sec:subspace}, we analyze the subspace restriction test and prove Theorem \ref{thm:ocf2}.
Finally, Section \ref{sec:conclusion} contains some concluding remarks and a discussion of some open problems related to this paper.


\section{Odd-Cycle-Freeness and the Fourier Spectrum}\label{sec:charac}

Our goal in this section is to give two reformulations of OCF, one of a geometric flavor
and one in terms of the coefficients of the Fourier expansion of $f$. These
characterizations of OCF will be useful in the analysis of both the edge-sampling test and
the subspace restriction test which will be given in later sections.
We begin by recalling some basic facts about Fourier analysis of Boolean
functions.

The orthonormal characters $\left\{\chi_{\alpha}:\F_2^n\ra \R,
  ~\chi_{\alpha}(x)=(-1)^{\alpha \cdot x} \right\}_{\alpha\in \F_2^n}$ form a basis for
the set of $\zo$-valued functions defined over $\F_2^n$, where the inner product is given
by $\langle f, g\rangle=\E_{x}[f(x)g(x)]$.  The Fourier coefficient of $f$ at $\alpha\in
\F_2^n$ is  ${\widehat f}(\al)=\E_{x}[f(x)\chi_{\al}(x)]$.
The {\em density} of $f$ is $\rho=\E_x f(x)=\widehat f(0)$, and notice that
$\rho=\max_{\alpha\in \F_2^n} |\widehat f(\al)|.$ The support of $f$ is $\supp(f)=\{x \in
\F_2^n| f(x)\not = 0\}.$  Parsevals's identity states that $\sum_{\al} {\widehat f (\alpha)}^2=\E_x
f(x)^2=\rho.$ Also, for all $\alpha$, $\hat{f}(\alpha) \geq \max(-\rho,-1/2)$.

\ignore{The Cayley graph  over $\F_2^n$ with generating set $S=\supp(f)$ is the
graph  $\calg(f)=(V, E)$ with vertex set $V=\F_2^n$, edge set $E=\{(x, y)~|~ x-y\in S \}$,
and regular degree $d=|S|$. }

We first notice that the presence of cycles in a function induces a certain distribution
of the density of the function on halfspaces.
\begin{claim}~\label{cl:ocf1}
Let $f:\F_2^n\ra \{0,1\}$. Then:
\begin{enumerate}
\item[(a)]
 $f$ is \ocf\ if and only if  there exists $\alpha\in \F_2^n$ such that for all $x \in
 \supp(f)$, $  \alpha \cdot x =1$ (i.e., there exists a linear subspace of dimension
 $n-1$ that does not contain any element of $\supp(f)$).
\item[(b)]
$f$ is $\eps$-far from \ocf\ if and only if for every $\alpha\in \F_2^n$, it holds that for
at least  $\eps 2^n$ many  $x \in \supp(f)$,~ $\alpha \cdot x=0$
(i.e., {\em every} linear subspace of dimension $n-1$ must contain at least $\eps 2^n$
elements of $\supp(f)$).
\end{enumerate}

\end{claim}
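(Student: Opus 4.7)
The plan is to establish part (a) first as a structural characterization of OCF via affine hyperplanes, and then derive part (b) by proving an exact formula for the distance to OCF in terms of the quantities appearing in (a). A preliminary observation that simplifies everything: although the OCF condition is stated with possibly repeated $x_i$'s, it is equivalent to the non-existence of an odd-size set of \emph{distinct} elements of $\supp(f)$ summing to zero, because collapsing duplicated entries modulo $2$ changes the length of the sequence by an even amount and hence preserves its parity.

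For part (a), the ``if'' direction is immediate: applying $\alpha$ to $x_1 + \cdots + x_k = 0$ yields $k \equiv 0 \pmod 2$ whenever every $x_i \in \supp(f)$. For ``only if,'' I would extract a basis $v_1, \ldots, v_d$ of $V = \sspan(\supp(f))$ from $\supp(f)$ itself; this is possible since $0 \notin \supp(f)$ (otherwise $k=1$ gives a forbidden odd cycle). Choose any $\alpha \in \F_2^n$ with $\alpha \cdot v_i = 1$ for $i = 1, \ldots, d$, which is feasible because the $v_i$'s are linearly independent. For any other $x \in \supp(f)$, write $x = \sum_{j \in S} v_j$ uniquely; then $x + \sum_{j \in S} v_j = 0$ exhibits a dependency among $|S|+1$ elements of $\supp(f)$, so OCF forces $|S|+1$ to be even, giving $\alpha \cdot x = |S| \bmod 2 = 1$ as desired.

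For part (b), I would prove the sharper identity
\[ \delta_{\ocf}(f) \;=\; \min_{\alpha \in \F_2^n}\, \Pr_x[\,f(x) = 1 \,\wedge\, \alpha \cdot x = 0\,], \]
from which (b) follows by unfolding definitions. The upper bound is constructive and uses (a): for each $\alpha$, define $g$ by zeroing $f$ on $\{x : \alpha \cdot x = 0\}$; then $\supp(g)$ lies in the affine hyperplane $\{\alpha \cdot x = 1\}$, so $g$ is OCF by (a), and $\delta(f,g)$ equals the right-hand side evaluated at $\alpha$. For the lower bound, given any OCF $g$ with $\delta(f,g) = \delta$, part (a) furnishes $\alpha$ with $\alpha \cdot y = 1$ for every $y \in \supp(g)$, and every $x$ with $f(x) = 1$ and $\alpha \cdot x = 0$ must then have $g(x) = 0$, contributing to $\delta$.

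The one mildly delicate step is the reverse direction of (a): one must verify that the values forced on non-basis elements of $\supp(f)$ are consistent with the extension of $\alpha$ satisfying $\alpha \cdot x = 1$. The parity argument using the OCF hypothesis is exactly what delivers this consistency, and it is also the conceptual bridge into the distance identity used in part (b).
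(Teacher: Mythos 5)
Your proof is correct and follows essentially the same route as the paper's: both directions of (a) rest on the even/odd-sum parity argument that characterizes \ocf\ by an affine hyperplane containing $\supp(f)$, and (b) is obtained in both cases by deleting the support points lying on the complementary linear hyperplane. The only cosmetic difference is that you construct the functional $\alpha$ directly from a basis of $\sspan(\supp(f))$ chosen inside the support, whereas the paper constructs its kernel (the subspace of even sums of support elements) and extends it to codimension one; your write-up also makes explicit the exact distance identity that the paper leaves implicit in its one-line proof of (b).
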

\begin{proof}
We first prove part (a). To see the ``if'' direction, suppose $f$ is not odd-cycle-free, but
there exists $\alpha$ such that for all $x \in \supp(f)$, $\alpha \cdot x = 1$. Now, let
$x_1, \dots, x_{k-1} \in \supp(f)$ be such that $x_1 + \cdots + x_{k-1} \in \supp(f)$ and
$k$ is odd. But then $\alpha \cdot (x_1 + \cdots + x_{k-1}) = \sum_{i=1}^{k-1} \alpha
\cdot x_i = 0$, a contradiction. For the opposite direction, suppose $f$ is
odd-cycle-free. If $f$ is the zero function, we are vacuously done.  Assuming otherwise,
let $S=\supp(f)$, and consider the set $H'=\{x_1 + \cdots + x_k~ :~ x_1,\dots, x_k
\in S, \text{ and } k\geq 0 \text{ is even}\}$. Since $f$ is \ocf, $H'\cap
S=\emptyset$.  It is easy to see that $H'$ is a linear subspace of codimension $1$ inside $\sspan(S)$.
It follows that $H'$ can be extended to a subspace $H$ of dimension $n-1$ such that $H\cap
S=\emptyset$.

For part (b), if $f$ is $\eps$-far from being \ocf, then by part (a), every linear
subspace $H$ of dimension $n-1$ must contain  $\eps 2^n$ elements of $\supp(f)$ (otherwise
removing less than $\eps 2^n $  points from $\supp(f)$ would create a function that is
\ocf.) The converse follows again by part (a).
\end{proof}

\begin{lemma} \label{lem:ocf2}
Let $f:\F_2^n\ra \{0,1\}$. Then
\begin{enumerate}
\item[(a)]\label{a} $f$ is \ocf\ if and only if there exists $\alpha\in \F_2^n$ such that ${\widehat f}(\alpha)=-\rho.$
\item[(b)]\label{b} $f$ is $\eps$-far from being \ocf\ if and only if for all $\beta \in
  \F_2^n$,~ $\widehat f(\beta) \geq -\rho+2\eps.$
\item[(c)] The distance of $f$ from \ocf\ is exactly $\frac{1}{2}\left(\rho + \min_{\alpha}
  \hat{f}(\alpha)\right)$.
\end{enumerate}
\end{lemma}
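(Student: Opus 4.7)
The plan is to read this lemma directly off Claim~\ref{cl:ocf1} by rewriting $\widehat f(\alpha)$ as a signed count of the support of $f$. Specifically, writing
\[
\widehat f(\alpha) \;=\; \E_x[f(x)(-1)^{\alpha\cdot x}] \;=\; \frac{1}{2^n}\bigl(|S_0^\alpha|-|S_1^\alpha|\bigr),
\]
where $S_i^\alpha = \{x\in\supp(f) : \alpha\cdot x = i\}$, and using $|S_0^\alpha|+|S_1^\alpha| = 2^n \rho$, one gets the clean identity $\widehat f(\alpha) = \rho - 2\cdot|S_0^\alpha|/2^n$. This is the only calculation really needed; everything else is a translation.

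For part (a), $\widehat f(\alpha) = -\rho$ is by this identity equivalent to $|S_0^\alpha|=0$, i.e.\ every $x\in\supp(f)$ satisfies $\alpha\cdot x = 1$. Claim~\ref{cl:ocf1}(a) says this is precisely the condition that $f$ is \ocf. For part (b), $\widehat f(\beta) \geq -\rho + 2\eps$ is equivalent to $|S_0^\beta|/2^n \geq \eps$, i.e.\ every hyperplane $\{x : \beta\cdot x = 0\}$ meets $\supp(f)$ in at least $\eps 2^n$ points, and Claim~\ref{cl:ocf1}(b) identifies this with being $\eps$-far from \ocf.

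For part (c), I would simply take the largest $\eps$ for which the equivalence in (b) holds. Since being $\eps$-far means $\delta_{\ocf}(f)\geq\eps$, part (b) gives
\[
\delta_{\ocf}(f) \;=\; \sup\bigl\{\eps : \min_\alpha\widehat f(\alpha) \geq -\rho + 2\eps\bigr\} \;=\; \tfrac{1}{2}\bigl(\rho + \min_\alpha \widehat f(\alpha)\bigr),
\]
where one should briefly justify that the supremum is attained by noting $\min_\alpha \widehat f(\alpha)$ is a minimum over the finite set $\F_2^n$.

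There is no real obstacle; the only thing to be slightly careful about is the boundary cases. In (a) one should separately handle the zero function (for which every $\alpha$ trivially works and $\rho=0$). In (b), the ``if'' direction of Claim~\ref{cl:ocf1}(b) as stated needs the observation that if some hyperplane avoids $\supp(f)$, then by part (a) the function is already \ocf\ (so at distance $0$), and otherwise the $\eps 2^n$ threshold matches removing $\eps 2^n$ points from the side with fewer support points to land in \ocf, which is exactly the translation already made in the proof of Claim~\ref{cl:ocf1}.
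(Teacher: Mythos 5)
Your proof is correct and follows essentially the same route as the paper: both read the lemma off Claim~\ref{cl:ocf1} by expressing $\widehat f(\alpha)$ as a signed count of the support points on either side of the hyperplane $\alpha\cdot x=0$, and both obtain (c) as a direct restatement of (b). One small slip: your displayed identity should read $\widehat f(\alpha)=-\rho+2|S_0^\alpha|/2^n$ (equivalently $\rho-2|S_1^\alpha|/2^n$), not $\rho-2\cdot|S_0^\alpha|/2^n$; the deductions you draw from it are all consistent with the corrected version, so nothing downstream is affected.
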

\begin{proof} 
By Claim~\ref{cl:ocf1}, $f$ is \ocf\ if and only if there exists $\alpha$ such that $\alpha \cdot x=1$ for all $x\in \supp(f)$, and so it follows that $$\widehat f(\alpha)=\E f(x) (-1)^{\alpha\cdot x}=-\rho.$$
This implies item (a) of the lemma. To derive item (b), we get from Claim~\ref{cl:ocf1}, that $f$ is $\eps$-far from \ocf\ if an only if any halfspace
contains at least an $\eps$ fraction of the domain, implying that
for each $\beta\in \F_2^n$:
\begin{eqnarray*}
  \widehat f(\beta) &=&\E_{x\in \F_2^n}~~ f(x) (-1)^{\beta\cdot x} = \frac 1 {2^n}\left( \sum_{x\in \supp, \beta\cdot x=0} 1+ \sum_{x\in \supp, \beta\cdot x=1} (-1)\right)\\
&\geq& \eps+(-\rho+\eps)=-\rho+2\eps.
\end{eqnarray*}
Finally, observe that item (c) is just a restatement of item (b).
\end{proof}

We see that the minimum Fourier coefficient of $f$ determines its distance
from \ocf. Since Fourier coefficients also measure correlation to linear
functions, it is natural to ask about the relationship between a function's distance
to \ocf\ and its distance to linearity\footnote{A function $f: \F_2^n \to \F_2$ is said to be linear
if $f(x+y) = f(x) + f(y)$ for all $x,y$ (the range $\{0,1\}$ has been identified with
$\F_2$). Note that linear functions are
  \ocf. However, the converse is certainly false, since the function $f(x) = x_1 x_2$ is
 \ocf\ but $1/4$-far from linear.}.
Easy Fourier analysis shows that the distance of a function $f:\F_2^n \to \F_2$ to
linearity is exactly $\min(\rho, \frac{1}{2} + \min_{\alpha} \hat{f}(\alpha))$. So,  the distance to
linearity, in contrast to \ocf, is not always determined by the minimum
Fourier coefficient.


\section{The Edge-Sampling Test}\label{sec:edge-sample}

In this section, we analyze the edge-sampling test and prove Theorem \ref{thm:ocf1}.
The analysis starts with the characterization of OCF given in the previous section and
then proceeds to reduce the problem of testing OCF for Boolean functions to testing
bipartiteness in dense graphs. We then show why Theorem \ref{thm:distest} follows.

\ignore{ The main idea is to construct a Cayley graph corresponding to the function, such that the
 graph is bipartite if and only if the function is odd-cycle-free.  Central to the
 analysis of the reduction is the fact that the eigenvalues of the Cayley graph correspond
 to the Fourier coefficients of the function. }

For a function $f: \F_2^n \to \{0,1\}$, define the Cayley graph $\calg(f)=(V, E)$ to be
the graph with vertex set $V=\F_2^n$ and edge set $E=\{(x, y)~|~ f(x-y) = 1 \}$.
Let us denote by $N = 2^n$ and let $A_{\calg}$ be the adjacency matrix of $\calg$.
The next lemma is well-known but we include its proof for the sake of completeness.
\begin{lemma}\label{lem:cayley-eigs}
For any $\alpha\in \F_2^n$, the character $\chi_{\alpha}$ is an eigenvector of $A_{\calg}$
of normalized eigenvalue $\widehat f(\alpha)$. Moreover, the set $\{ 2^n\widehat f(\alpha)
\}_{\alpha}$ is exactly the set of all the eigenvalues of $A_{\calg}$. 
\end{lemma}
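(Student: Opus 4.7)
The plan is to verify the eigenvector/eigenvalue claim by direct computation, using the group structure of $\F_2^n$, and then use a dimension-counting argument to conclude that the characters exhaust the spectrum.

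First I would fix $\alpha\in\F_2^n$ and apply the adjacency matrix $A_{\calg}$ to the character $\chi_{\alpha}$, evaluated at an arbitrary point $x\in\F_2^n$:
\[
(A_{\calg}\chi_{\alpha})(x) \;=\; \sum_{y\in\F_2^n} f(x-y)\,\chi_{\alpha}(y).
\]
Substituting $z=x-y$ and using that $\chi_{\alpha}(x-z)=\chi_{\alpha}(x)\chi_{\alpha}(z)$ (since we are in characteristic $2$, so $\chi_{\alpha}(-z)=\chi_{\alpha}(z)$ and $\chi_{\alpha}$ is a group homomorphism), this becomes
\[
\chi_{\alpha}(x)\sum_{z\in\F_2^n} f(z)\chi_{\alpha}(z) \;=\; \chi_{\alpha}(x)\cdot 2^{n}\,\widehat f(\alpha),
\]
by the definition $\widehat f(\alpha)=\E_{z}[f(z)\chi_{\alpha}(z)]$. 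Hence $\chi_{\alpha}$ is an eigenvector of $A_{\calg}$ with eigenvalue $2^n\widehat f(\alpha)$, which matches $\widehat f(\alpha)$ once normalized by the number of vertices $N=2^n$.

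For the ``moreover'' part, I would invoke the standard fact that the characters $\{\chi_{\alpha}\}_{\alpha\in\F_2^n}$ are pairwise orthogonal (with respect to $\langle g,h\rangle=\E_x[g(x)h(x)]$) and hence linearly independent. Since there are exactly $N=2^n$ of them, they form a full basis of $\R^{\F_2^n}$. The previous paragraph shows that $A_{\calg}$ is diagonalized in this basis, with $\chi_{\alpha}$ contributing the eigenvalue $2^n\widehat f(\alpha)$. Counting multiplicities, this list exhausts all $N$ eigenvalues of $A_{\calg}$.

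There is essentially no obstacle here; the only minor care point is the sign conventions in characteristic $2$ (so that the translation $z\mapsto x-z$ does not introduce a spurious factor), and the observation that ``the set of all eigenvalues'' in the statement is understood with multiplicities, so no further argument is needed beyond dimension counting.
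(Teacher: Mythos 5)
Your proof is correct and follows essentially the same route as the paper's: the same change of variables $z=x-y$ together with the multiplicativity of $\chi_\alpha$ gives $A_{\calg}\chi_\alpha = 2^n\widehat f(\alpha)\chi_\alpha$, and the "moreover" part is concluded exactly as in the paper from the fact that the $2^n$ characters are orthogonal and hence form a full eigenbasis. No gaps.
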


\begin{proof}
Notice that the entry indexed by $x_i$ in $b=A \chi_{\alpha}$ is
\begin{eqnarray*}
b_i&=&\sum\limits_{x_j\in \F_2^n} f(x_i-x_j) \chi_{\alpha}(x_j)\\
   &=& \sum\limits_{x\in \F_2^n} f(x) \chi_{\alpha}(x_i-x)\\
   &=& \chi_{\alpha}(x_i) \sum\limits_{x\in \F_2^n} f(x) \chi_{\alpha}(x)\\
   &=& \chi_{\alpha}(x_i) (2^n \widehat f(\alpha)).
\end{eqnarray*}
Therefore, $A \chi_{\alpha}=(2^n \widehat f(\alpha)) \chi_{\alpha}$, and since the set of
characters contains $2^n$ orthogonal vectors, the lemma follows.
\end{proof}

We remind the reader that in the context of testing graph properties, a graph is
$\epsilon$-far from being bipartite if one needs to
remove at least $\epsilon n^2$ edges in order to make it odd-cycle-free. In order to be
able to apply results concerning testing odd-cycle-freeness in graphs we will have to
prove that $\calg(f)$ is in fact $\epsilon$-far from {\em any} bipartite graph.  To this
end, we show the following lemma that relates the distance to being bipartite to the least
eigenvalue of the adjacency matrix. In what follows, we denote by $e(S)$ the number of edges inside
a set of vertices $S$ in some graph $G$.

\begin{lemma}\label{lem:vertices-edges}
Let $\lambda_{min}$ be the smallest eigenvalue of the adjacency matrix $A$ of an
$n$-vertex $d$-regular graph $G$. Then for every $U\subseteq V(G)$, we have
\begin{equation}\label{eqedges}
e(U) \geq \frac{|U|}{2n} \left(|U| d+\lambda_{min} (n-|U|) \right).
\end{equation}
\end{lemma}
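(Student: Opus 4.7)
The plan is a standard spectral argument based on the quadratic form $\mathbf{1}_U^T A \mathbf{1}_U$, which counts $2 e(U)$ because $G$ is undirected. First I would record the elementary identities $\|\mathbf{1}_U\|^2 = |U|$ and $\mathbf{1}^T \mathbf{1}_U = |U|$, and use the fact that, since $G$ is $d$-regular, the all-ones vector $\mathbf{1}$ is an eigenvector of $A$ with eigenvalue $d$. The natural next step is to decompose
\[
\mathbf{1}_U = \frac{|U|}{n}\,\mathbf{1} + v,
\]
where $v \perp \mathbf{1}$. A direct computation gives $\|v\|^2 = |U| - |U|^2/n = |U|(n - |U|)/n$.

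Because $A$ is symmetric and $\mathbf{1}$ is an eigenvector, the subspace $\mathbf{1}^\perp$ is $A$-invariant, so $\mathbf{1}^T A v = d\,\mathbf{1}^T v = 0$. Expanding the quadratic form therefore yields
\[
2 e(U) = \mathbf{1}_U^T A \mathbf{1}_U = \frac{|U|^2}{n^2}\,\mathbf{1}^T A \mathbf{1} + v^T A v = \frac{|U|^2 d}{n} + v^T A v.
\]
Next I would use the Rayleigh quotient bound $v^T A v \geq \lambda_{min}\|v\|^2$, valid for any vector $v$, to conclude
\[
2 e(U) \geq \frac{|U|^2 d}{n} + \lambda_{min}\,\frac{|U|(n - |U|)}{n} = \frac{|U|}{n}\bigl(|U|d + \lambda_{min}(n - |U|)\bigr).
\]
Dividing by $2$ gives the claimed inequality.

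There is no real obstacle here; the only thing to be careful about is the sign of $\lambda_{min}$ (typically negative), which is precisely why the bound can be nontrivial when $|U|$ is close to $n/2$ — in that regime the second term is strongly negative and its smallness (in absolute value) forces $e(U)$ to be large. This is exactly the flavor needed for the subsequent application, since by Lemma~\ref{lem:cayley-eigs} the smallest eigenvalue of $A_{\mathcal{G}(f)}$ equals $N \cdot \min_\alpha \hat f(\alpha)$, which by Lemma~\ref{lem:ocf2} controls the distance of $f$ from \ocf.
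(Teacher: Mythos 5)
Your proof is correct and is essentially the same argument as the paper's: the paper expands $\mathbf{1}_U$ in a full orthonormal eigenbasis of $A$ and bounds the contribution of the non-principal eigenvectors by $\lambda_{min}$ times their squared mass, which is exactly your decomposition $\mathbf{1}_U = \frac{|U|}{n}\mathbf{1} + v$ followed by the Rayleigh bound on $v$. All your intermediate computations ($\|v\|^2$, the vanishing cross term, $\mathbf{1}^TA\mathbf{1}=dn$) check out.
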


\begin{proof} Let $u$ be the indicator vector of $U$. We clearly have
$$
u^TAu=2e(U)\;.
$$
Since $A$ is symmetric it has a collection of eigenvectors $v_1,\ldots,v_n$ which form an orthonormal basis for $\mathbb{R}^n$.
Let $\lambda_1,\ldots,\lambda_n$ be the eigenvalues corresponding to these eigenvectors where $\lambda_n=\lambda_{min}$.
Suppose we can write $u=\sum^n_{i=1}\alpha_iv_i$ in this basis and note that $\sum^n_{i=1}\alpha^2_i=|U|$.
Since $G$ is $d$-regular, $(1/\sqrt{n},\ldots,1/\sqrt{n})$ is an eigenvector of $A$. Suppose this is $v_1$ and note that this means that $\lambda_1=d$ and $\alpha_1=|U|/\sqrt{n}$. Combining the above observations we see that
\begin{eqnarray*}
u^TAu &=& \sum^n_{i=1}\lambda_i\alpha^2_i\\
     &=& d|U|^2/n+\sum^n_{i=2}\lambda_i\alpha^2_i\\
     &\geq&  d|U|^2/n + \lambda_{min} \left(\sum^n_{i=2}\alpha^2_i\right)\\
     &=& d|U|^2/n + \lambda_{min} (|U|-|U|^2/n)\;.
\end{eqnarray*}
We now get (\ref{eqedges}) by combining the above two expressions for $u^TAu$.
\end{proof}

\ignore{

\begin{proof}
We first recall the so-called Rayleigh quotient characterization of the smallest eigenvalue $\lambda$, which states that
\begin{equation}\label{lambdamin}
\lambda_{min} =  \min_{0 \neq x\in \R^n} \frac{ x^tAx}{x^tx}\;.
\end{equation}
The lemma clearly holds when $U=V(G)$, so assume that $U \subset V(G)$.
Let $y=n {\1}_U-|U|\1,$ where $\1_U$ is the indicator vector of the set $U$.
We have
\begin{eqnarray*}
y^tAy &=& (n {\1}_U-|U|\1)^t A (N {\1}_U-|U|\1)\\
      &=& n^2 {\1}_U^t A {\1}_U -2n|U| {\1}_U^t A \1+|U|^2 \1^t A\1\\
      &=& n^2 2e(U)-2n|U|\cdot |U|d+|U|^2\cdot nd\\
      &=& n^2 2e(U)-n|U|^2d.
\end{eqnarray*}
and
\begin{eqnarray*}
y^ty &=& (n {\1}_U-|U|\1)^t(n {\1}_U-|U|\1)\\
     &=& n^2 {\1}_U^t{\1}_U-2n|U| \1^t \1_U+|U|^2 \1^t\1\\
     &=& n^2 |U|-2n|U|^2+|U|^2n\\
     &=& n^2 |U|-|U|^2n>0.
\end{eqnarray*}
We know from (\ref{lambdamin}) that $y^tAy \geq \lambda_{min} \cdot y^ty$. Plugging the above expressions
for $y^tAy$ and $y^ty$ into this inequality gives (\ref{eqedges}).
\end{proof}
}



\begin{corollary}\label{cor:dense-halves}
Let $G$ be an $n$-vertex $d$-regular graph with $\lambda_{min} \geq -d +2\epsilon n$. Then $G$ is $\epsilon/2$-far from being bipartite.
\end{corollary}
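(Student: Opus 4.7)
\medskip

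\noindent\textbf{Proof plan for Corollary \ref{cor:dense-halves}.} The plan is to show directly that for \emph{every} partition $V(G)=U\cup \overline{U}$, the number of ``bad'' edges $e(U)+e(\overline{U})$ that need to be deleted to bipartize $G$ along this partition is at least $\epsilon n^2/2$. Since a graph's distance to bipartiteness is the minimum of $(e(U)+e(\overline{U}))/n^2$ over all $U$, this implies $G$ is $\epsilon/2$-far from bipartite.

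Fix $U\subseteq V(G)$ and set $s=|U|$, so $|\overline{U}|=n-s$. Applying Lemma \ref{lem:vertices-edges} to both $U$ and its complement yields
\[
e(U)+e(\overline{U}) \;\geq\; \frac{s}{2n}\bigl(sd+\lambda_{min}(n-s)\bigr) \;+\; \frac{n-s}{2n}\bigl((n-s)d+\lambda_{min}\,s\bigr).
\]
Collecting the $d$-terms and the $\lambda_{min}$-terms and using the identity $s^2+(n-s)^2-2s(n-s)=(2s-n)^2$, one can reorganize the right-hand side as
\[
\frac{d\bigl(s^2+(n-s)^2\bigr)}{2n} \;+\; \frac{\lambda_{min}\, s(n-s)}{n}.
\]
Now substitute the hypothesis $\lambda_{min}\geq -d+2\epsilon n$. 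The $-d$ portion combines with the $d$-term above to give $\frac{d(2s-n)^2}{2n}$, and the $2\epsilon n$ portion contributes $2\epsilon s(n-s)$. Writing $t=2s-n$ so that $s(n-s)=(n^2-t^2)/4$, one obtains
\[
e(U)+e(\overline{U}) \;\geq\; \frac{d\,t^2}{2n} + \frac{\epsilon(n^2-t^2)}{2} \;=\; \frac{t^2(d-\epsilon n)}{2n} + \frac{\epsilon n^2}{2}.
\]

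The key observation is that $d-\epsilon n\geq 0$: indeed $\lambda_{min}\leq d$ always (the all-ones vector witnesses this for a $d$-regular graph), so the hypothesis $-d+2\epsilon n\leq \lambda_{min}\leq d$ forces $d\geq \epsilon n$. Thus the first term is non-negative, and we conclude $e(U)+e(\overline{U})\geq \epsilon n^2/2$ for every partition, giving the claim. There is no serious obstacle here; the only thing to check carefully is the algebraic rearrangement and the sign of $d-\epsilon n$, both of which are routine.
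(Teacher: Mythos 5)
Your proof is correct and follows essentially the same route as the paper's: apply Lemma \ref{lem:vertices-edges} to both sides of an arbitrary bipartition, sum, and substitute the eigenvalue bound; your change of variable $t=2s-n$ is just a cosmetic repackaging of the paper's minimization of $c^2+(1-c)^2$ at $c=1/2$. If anything, you are slightly more careful than the paper in explicitly justifying $d\geq \epsilon n$ (equivalently, that the coefficient of the quadratic term is non-negative), which the paper's argument also needs but leaves implicit.
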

\begin{proof} It is clearly enough to show that in any bipartition of the vertices of $G$ into sets $A,B$ we have $e(A)+e(B)\geq \frac12\epsilon n^2$.
So let $(A,B)$ be one such bipartition and suppose $|A|=cn$ and $|B|=(1-c)n$. From Lemma \ref{lem:vertices-edges} we get that
\begin{eqnarray*}
e(A) &\geq& \frac{c}{2}(dcn+(-d+2\epsilon n)(n-cn))\\
     &=&\frac{c}{2}(2\epsilon n^2 - dn)+\frac{c^2}{2}(2dn-2\epsilon n^2)\;,
\end{eqnarray*}
and similarly
$$
e(B) \geq \frac{1-c}{2}(2\epsilon n^2 - dn)+\frac{(1-c)^2}{2}(2dn-2\epsilon n^2)\;.
$$
Hence
\begin{eqnarray*}
e(A)+e(B) &\geq& \frac12(2\epsilon n^2-dn)+\frac12(c^2+(1-c)^2)(2dn-2\epsilon n^2)\\
          &\geq& \frac12(2\epsilon n^2-dn)+\frac14(2dn-2\epsilon n^2)\\
          &=& \frac12\epsilon n^2\;,
\end{eqnarray*}
where in the second inequality we use the fact that $c^2+(1-c)^2$ is minimized when $c=1/2$.
\end{proof}


We can now derive the following {\em exact} relation between the \ocf\ property of functions and the bipartiteness of the corresponding Cayley graphs.

\begin{corollary}\label{lem:main}
Let $f:\F_2^n\ra \F_2$. If $f$ is $\eps$-far from being \ocf, then $\calg(f)$ is $\eps/2$-far from being bipartite.
Furthermore, if $f$ is $\eps$-close to being \ocf, then $\calg(f)$ is $\eps/2$-close to being bipartite.
\end{corollary}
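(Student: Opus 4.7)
\medskip\noindent\textbf{Proof proposal.}
The plan is to establish the two directions separately, translating between the Fourier/algebraic characterization of $f$ and combinatorial/spectral properties of $\calg(f)$.

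For the first direction, I would invoke Lemma~\ref{lem:ocf2}(b): if $f$ is $\eps$-far from \ocf, then $\hat f(\beta)\ge -\rho+2\eps$ for every $\beta\in\F_2^n$. By Lemma~\ref{lem:cayley-eigs}, the eigenvalues of the adjacency matrix $A_{\calg(f)}$ are exactly $\{N\hat f(\alpha)\}_\alpha$, and the graph $\calg(f)$ is $d$-regular with $d=N\rho$ (corresponding to $\alpha=0$). Hence $\lambda_{\min}(A_{\calg(f)})=N\min_\alpha\hat f(\alpha)\ge N(-\rho+2\eps)=-d+2\eps N$. Corollary~\ref{cor:dense-halves} then immediately gives that $\calg(f)$ is $\eps/2$-far from being bipartite.

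For the second direction, let $g:\F_2^n\to\{0,1\}$ be an \ocf\ function with $\delta(f,g)\le\eps$, so that $|\supp(f)\triangle\supp(g)|\le\eps N$. By Claim~\ref{cl:ocf1}(a), there exists $\alpha$ with $\alpha\cdot x=1$ for every $x\in\supp(g)$; in particular $0\notin\supp(g)$. Partitioning the vertex set $\F_2^n=\{x:\alpha\cdot x=0\}\cup\{x:\alpha\cdot x=1\}$ gives a bipartition of $\calg(g)$, because any edge $(u,u+x)$ with $x\in\supp(g)$ crosses the two sides. Thus $\calg(g)$ is bipartite, and to conclude I only need to bound the edge-symmetric-difference between $\calg(f)$ and $\calg(g)$. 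Each nonzero $x\in\supp(f)\triangle\supp(g)$ contributes exactly $N/2$ (unordered) edges, so the total number of edge modifications needed to pass from $\calg(f)$ to $\calg(g)$ is at most $\eps N\cdot N/2=\eps N^2/2$, i.e., an $\eps/2$ fraction of $N^2$. Hence $\calg(f)$ is $\eps/2$-close to the bipartite graph $\calg(g)$.

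The main subtlety I anticipate is the bookkeeping of conventions: ensuring that the relation between $\lambda_{\min}$ and $\min_\alpha\hat f(\alpha)$ uses the right normalization (the factor $N$ in Lemma~\ref{lem:cayley-eigs}), that distances in the dense graph model are measured against $N^2$ so that both directions yield the same constant $1/2$, and that self-loops (arising from $f(0)=1$) play no role either because $0\notin\supp(g)$ when $g$ is \ocf\ or because the $O(1/N)$ contribution of the diagonal is negligible. Apart from this, both directions are short once the Fourier/eigenvalue dictionary and Corollary~\ref{cor:dense-halves} are in place.
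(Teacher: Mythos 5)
Your proposal is correct and follows essentially the same route as the paper: the first direction is identical (Lemma~\ref{lem:ocf2}(b) plus Lemma~\ref{lem:cayley-eigs} to get $\lambda_{\min}\ge -d+2\eps N$, then Corollary~\ref{cor:dense-halves}), and the second direction is the same reduction via a nearby \ocf\ function whose Cayley graph is bipartite, with the edge count $\eps N\cdot N/2$. The only cosmetic difference is how bipartiteness of the modified Cayley graph is certified: you exhibit the explicit bipartition by the halfspace $\{x:\alpha\cdot x=0\}$ from Claim~\ref{cl:ocf1}(a), whereas the paper argues that an odd cycle in $\calg(f')$ would translate into an odd cycle in $f'$; both are valid, and your version also transparently covers the case where the nearest \ocf\ function adds points to the support rather than only deleting them.
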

\begin{proof}
Suppose $f$ is $\eps$-far from being \ocf. Let us suppose $\supp(f) = \rho N$ where $N = 2^n$.  By Lemmas \ref{lem:ocf2} and
\ref{lem:cayley-eigs}, if $f$ is $\eps$-far from being \ocf, then the smallest eigenvalue
of the adjacency matrix of $\calg(f)$ is $\lambda_{min} \geq -(\rho+2\eps)N$.   For a function
$f$, recall that $\calg(f)$ is a regular graph with degree $d = |\supp(f)| = \rho N$.
Hence, we can use Corollary \ref{cor:dense-halves} to infer that $\calg(f)$ must be $\epsilon/2$-far from being
bipartite.

Suppose now that $f$ is $\eps$-close to being \ocf, and let $S$ be the set of $\epsilon N$
points in $\F^n_2$ whose removal from the support of $f$ makes it
OCF. Call this new function $f'$. Observe that every $x \in \F^n_2$ for which $f(x)=1$
accounts for $N/2$ edges in $\calg(f)$. Hence, removing from $\calg(f)$ all edges
corresponding to $S$ results in the removal of at most $\frac12 \epsilon N^2$ edges. To finish the
proof we just need to show that the new graph $\calg(f')$ (note that the new graph is
indeed the Cayley graph of the new function $f'$)
does not contain any odd cycle.
Suppose to the contrary that it contains an odd cycle $\al_1,\ldots,\al_k,\al_1$.
For $1 \leq i \leq k$ set $x_i=\al_{i+1}-\al_i$. Then by definition of $\calg(f')$ we have $f'(x_1)=\ldots=f'(x_k)=1$. Furthermore, as
$\al_1+\sum^k_{i=1}x_i=\al_1$ (since we have a cycle in $\calg(f')$) we get that $\sum^k_{i=1}x_i=0$ so $x_1,\ldots,x_k$ in an odd-cycle in $f'$ contradicting the assumption
that $f'$ is OCF.
\end{proof}

We are now ready to complete the proof of Theorem \ref{thm:ocf1} using the following
result of Alon and Krivelevich~\cite{AlonK02}.

\begin{theorem}[\cite{AlonK02}]\label{thm:AK}
Suppose a graph $G$ is $\epsilon$-far from being bipartite. Then a random subset of
vertices of $V(G)$ of size $\tilde O(1/{\eps})$ spans a non-bipartite
graph with probability at least $3/4$.
\end{theorem}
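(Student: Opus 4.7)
The plan is to prove this contrapositively: assuming $G$ is $\eps$-far from bipartite, show that a random vertex sample $S$ of size $s = \tilde{O}(1/\eps)$ satisfies $\Pr[G[S] \text{ is bipartite}] \leq 1/4$. I would split $S = U \cup W$ into two independent uniform samples of sizes $m = s/2$ each and argue in two stages, conditioning first on $U$ and then on $W$.

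First, for each bipartition $\pi = (U_1, U_2)$ of $U$, define a canonical extension $\Pi(\pi) = (V_1, V_2)$ of $\pi$ to all of $V$: place every $v \in V$ on the side opposite to where the majority of its $U$-neighbors sit (breaking ties arbitrarily, and handling ``ambiguous'' vertices whose $U$-neighborhoods are roughly balanced between $U_1$ and $U_2$ separately). The key observation is that if $G[S]$ is bipartite, then the bipartition of $S$ restricts to some $\pi$ on $U$, and for consistency every $w \in W$ must lie opposite to all of its $U$-neighbors; in particular, any $w \in W$ with $G$-neighbors in both $U_1$ and $U_2$ already witnesses non-bipartiteness, and otherwise $w$ is forced onto the side prescribed by $\Pi(\pi)$. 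So it suffices to show that, with probability at least $3/4$ over $U$ and $W$, for \emph{every} bipartition $\pi$ of $U$ the induced subgraph $G[W]$ contains an edge violating $\Pi(\pi)$ (i.e.\ an edge inside $W \cap V_1$ or inside $W \cap V_2$).

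By the $\eps$-farness assumption, every such extension $\Pi(\pi)$ has at least $\eps n^2$ violating edges in $G$, so the expected number of violating edges seen inside $G[W]$ for a fixed $\pi$ is $\Theta(\eps m^2)$. A Chernoff/Janson-type concentration bound then shows that the probability $G[W]$ misses \emph{all} violating edges for a fixed $\pi$ is at most $\exp(-\Omega(\eps m^2))$. Taking a union bound over the $2^m$ bipartitions of $U$, we need $\eps m^2 \gg m$, i.e.\ $m = \tilde\Omega(1/\eps)$, which is exactly the sample size afforded by the theorem.

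The main technical obstacle is twofold. First, the union bound is over $2^m$ bipartitions of $U$, so the concentration must be exponential in $m$; this is what forces the sample size to $\tilde\Omega(1/\eps)$ rather than the naive $O(1/\eps)$ one might hope for (and, correspondingly, is why the canonical bipartiteness tester has query complexity $\tilde O(1/\eps^2)$ rather than $O(1/\eps)$). Second, making the extension $\Pi(\pi)$ well-defined and robust requires treating vertices with small or balanced $U$-neighborhoods carefully, since their assignment is noisy and could in principle absorb most of the $\eps n^2$ violating edges. The standard remedy is to fix a threshold of $\Theta(\eps n)$ on how many $U$-neighbors in a class make a vertex ``well-classified'' and to argue that the total contribution of poorly-classified vertices to violations is $o(\eps n^2)$ with high probability over $U$; this ensures that a constant fraction of the $\eps n^2$ violating edges of $\Pi(\pi)$ really do fall between well-classified pairs, to which the concentration argument applies cleanly.
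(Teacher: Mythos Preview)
The paper does not prove this theorem at all; it is quoted from Alon and Krivelevich \cite{AlonK02} and used as a black box in the proof of Theorem~\ref{thm:ocf1}. So there is no ``paper's own proof'' to compare against.

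That said, your sketch is a faithful outline of the actual Alon--Krivelevich argument: split the sample into $U$ and $W$, extend each of the $2^{|U|}$ bipartitions of $U$ to a partition of $V$, and show that $W$ detects a violation for every such extension. Your identification of the two technical obstacles (the $2^m$ union bound forcing $m = \tilde\Omega(1/\eps)$, and the need to handle vertices with small or balanced $U$-neighborhoods) is exactly where the work in \cite{AlonK02} lies. The one place where your sketch is too glib is the concentration step: the claim that $G[W]$ misses all violating edges with probability $\exp(-\Omega(\eps m^2))$ does not follow from a direct Chernoff bound, since the $\binom{m}{2}$ potential edges in $G[W]$ are not independent (they share vertices). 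Alon and Krivelevich instead argue via \emph{vertices}: they show that for each bad extension either an $\Omega(\eps)$-fraction of vertices are ``conflicting'' (have $U$-neighbors on both sides), or an $\Omega(\eps)$-fraction of vertices are ``violating'' (forced to a side where they have $\Omega(\eps n)$ same-side neighbors), and then a single pass through $W$ catches such a vertex with probability $1 - \exp(-\Omega(\eps m))$ per partition, which is what the union bound needs. Your second paragraph gestures at this with the ``well-classified'' threshold, but the reduction from edge-detection to vertex-detection is the step that makes the concentration go through.
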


\begin{proof-of-theorem}{\ref{thm:ocf1}}
%
First, if $f$ is OCF then the test will clearly accept $f$ (with probability 1). Suppose now that $f$ is $\epsilon$-far from being OCF. Then by
Corollary \ref{lem:main} we get that $\calg(f)$ is $\eps/2$-far from being bipartite. Now notice that we can think
of the points $\al_1,\ldots,\al_k \in \F^n_2$ sampled by the edge-sampling test as vertices sampled from $\calg(f)$.
By Theorem \ref{thm:AK}, with probability at least $3/4$, the vertices $\al_1,\ldots,\al_k$ span an odd-cycle of $\calg(f)$.
We claim that if this event happens, then the edge-sampling test will find an odd-cycle in $f$. Indeed, if $\al_1,\ldots,\al_k,\al_1$ is an odd-cycle in $\calg(f)$, then
as in the proof of Corollary \ref{lem:main} this means that $\al_{2}-\al_1,\ldots,\al_{3}-\al_2,\ldots,\al_{1}-\al_k$ is an odd-cycle of $f$.
Finally, the edge-sampling test will find this odd cycle, since it queries $f$ on all points $\al_i-\al_j$.
\end{proof-of-theorem}

In order to obtain Theorem \ref{thm:distest}, observe
that by Corollary \ref{lem:main}, the distance to OCF for
a function $f$ is exactly double the distance to bipartiteness for the graph
$\mathcal{G}(f)$.  We now invoke the following result of Alon, de la Vega, Kannan and
Karpinski \cite{ADKK}, which improved upon a previous result of Goldreich, Goldwasser and Ron \cite{GGR}.
\begin{theorem}[\cite{ADKK}]\label{thm:ADKK}
For every $\eps > 0$, there exists an algorithm that, given input graph $G$, inspects a
random subgraph of $G$ on $\tilde{O}(1/\eps^4)$ vertices and estimates the distance from $G$ to
bipartiteness to within an additive error of $\eps$.
\end{theorem}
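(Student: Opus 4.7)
The plan is to reduce the problem of estimating $G$'s distance to bipartiteness to approximating its maximum cut. Since the minimum number of edges to remove to make $G$ bipartite equals $|E(G)| - \mathrm{maxcut}(G)$, and since $|E(G)|/n^2$ can be estimated to additive error $\eps/2$ using $O(1/\eps^2)$ random pair queries by Hoeffding, it suffices to approximate $\mathrm{maxcut}(G)/n^2$ to additive error $\eps/2$ using $\tilde{O}(1/\eps^4)$ vertex queries.

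The algorithm is natural: sample a vertex subset $S \subseteq V(G)$ of size $s = \tilde{O}(1/\eps^4)$ uniformly at random, query all pairs of vertices in $S$, compute $\mathrm{maxcut}(G[S])$ by brute force (the theorem concerns query complexity only), and output $(n/s)^2 \cdot \mathrm{maxcut}(G[S])$. The remaining task is to show that this quantity lies within $\eps n^2 / 2$ of $\mathrm{maxcut}(G)$ with probability at least $3/4$.

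The lower-bound half, that the estimate is not much smaller than $\mathrm{maxcut}(G)$, is the easier direction. Fix an optimal bipartition $(V_1, V_2)$ of $V(G)$, and observe that $e_{G[S]}(V_1 \cap S, V_2 \cap S)$ has expectation $(s/n)^2 \cdot e_G(V_1, V_2)$. Hoeffding-type concentration for sampling without replacement gives failure probability at most $\exp(-\Omega(\eps^2 s))$, so $(n/s)^2 \mathrm{maxcut}(G[S]) \geq \mathrm{maxcut}(G) - O(\eps n^2)$ with high probability.

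The upper-bound half, that the sampled max-cut does not overshoot $\mathrm{maxcut}(G)$ by more than $\eps n^2 / 2$, is the main obstacle. For each bipartition $(S_1, S_2)$ of $S$ one would extend it greedily to a bipartition $(V_1, V_2)$ of $V$ by placing each $v \notin S$ on the side maximizing its edges to the opposite part of $S$, and argue that the cut $(V_1, V_2)$ of $G$ has size close to $(n/s)^2 e_{G[S]}(S_1, S_2)$. A naive union bound over the $2^s$ bipartitions of $S$ cannot be absorbed by the per-vertex Chernoff exponent $\eps^2 s$, so a finer argument is needed. I would adapt the cut-norm / weak regularity framework of Frieze and Kannan: approximate the adjacency matrix of $G$ by a sum of $O(1/\eps^2)$ cut matrices to within $\eps n^2$ in the cut-norm, observe that this low-complexity structure is preserved under uniform vertex subsampling at size $s = \tilde{O}(1/\eps^4)$, and conclude that the max-cut of $G[S]$, rescaled by $(n/s)^2$, approximates the max-cut of this low-complexity proxy for $G$ (and hence of $G$ itself). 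This reduces the union bound from $2^s$ events down to $\exp(\poly(1/\eps))$, which balances against the per-cut Chernoff exponent at the claimed sample size. Combining the two directions then yields the theorem.
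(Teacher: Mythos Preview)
This theorem is not proved in the paper: it is quoted verbatim from \cite{ADKK} and used as a black box in the proof of Theorem~\ref{thm:distest}. So there is no ``paper's own proof'' to compare your proposal against.

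As for your sketch itself, the overall plan (distance to bipartiteness $= |E|/n^2 - \mathrm{maxcut}(G)/n^2$, estimate the density trivially, estimate the normalized maxcut by sampling and brute-forcing on the sample) is indeed the shape of the argument in the literature. The lower-bound direction you describe is fine. For the upper-bound direction, your idea to use the Frieze--Kannan weak regularity / cut decomposition to cut the effective number of bipartitions down to $\exp(\poly(1/\eps))$ is one standard route and would work; the original \cite{ADKK} argument is a somewhat different direct sampling analysis for MAX-CSPs, but either approach lands at the same $\tilde{O}(1/\eps^4)$ sample size. If you were to flesh this out fully you would need to be a bit careful that the cut decomposition is itself stable under subsampling at the stated sample size (this is true and is where the $\tilde{O}$ and the fourth power come from), but as a proof plan what you wrote is sound.
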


\begin{proof-of-theorem}{\ref{thm:distest}}
Combining Theorem~\ref{thm:ADKK} with Lemma \ref{lem:main}
we immediately obtain a $\poly(1/\eps)$ query algorithm that estimates the distance to
odd-cycle-freeness with additive error at most $\eps$. 
  Since one can use sampling to estimate $\rho$ to within an additive error $\eps$ using $\poly(1/\eps)$ queries, it follows
from item (c) of Lemma \ref{lem:ocf2} that one can estimate $\min_\alpha \hat{f}(\alpha)$ to
within an additive error of $\eps$ using $\poly(1/\eps)$ queries.
\end{proof-of-theorem}

As we have mentioned earlier, the distance of $f$ from being linear is given by
$\min(\rho, \frac{1}{2} + \min_{\alpha} \hat{f}(\alpha))$, where $\rho$ is the density of $f$.
Therefore, given an estimate of $\rho$ and $\min_\alpha \hat{f}(\alpha)$ for some
function $f: \F_2^n \to \F_2$, one can also estimate the distance of $f$ to linearity.
Theorem \ref{thm:distest} thus gives a new distance estimator for linearity, and hence also a two-sided
tester for the property of linearity, both with $\poly(1/\eps)$ query complexity.

\ignore{
We also note here that one can easily obtain a $\tilde{O}(n)$-query distance estimator for
odd-cycle-freeness by using the Goldreich-Levin \cite{GoldreichL89} algorithm. For any
fixed $\eps_1$, we can
test if the distance is at least $\eps_1$ by using Goldreich-Levin to obtain the list of
Fourier coefficients that are of absolute value at least $|\rho-2\eps_1|$. This list is of
constant\footnote{We can always move $\eps_1$ by half the
additive error of the estimator if $\rho-2\eps_1$ is too close to zero.} length (for fixed
error parameter). Then, by sampling, we can estimate each coefficient in this list upto
small additive error and determine if any of them are smaller than $-\rho +2\eps_1$ in
(signed) value. Similarly, we can also test if the distance is at most $\eps_2$. Finally,
we can use binary search to estimate the distance to within an additive error.  The query
complexity of this algorithm is dominated by that of the Goldreich-Levin algorithm, $O(n
\log n)$. Note that we can improve the query complexity to a constant because for
estimating the distance, it is not necessary to explicitly obtain the list of all
coefficients of large magnitude.}

\section{The Subspace Restriction Test}\label{sec:subspace}

In this section, we analyze the subspace restriction test and prove
Theorem~\ref{thm:ocf2}.
We start with a few notational remarks. For $f:\F_2^n\ra \{0,1\}$ and subspace $H \leq
\F_2^n$, let $f_H: H\ra \{0,1\}$ be the restriction of $f$ to $H$, and let $\rho_H$ denote the
density of $f_H$, namely $\rho_H=\Pr_{x\in H}[ f_H(x)=1]$.
 For $\alpha\in \F_2^n$  and subspace $H$, define the restriction of the
 Fourier coefficients of $f$ to a subspace $H$ to be
\[
{\widehat f}_H(\alpha) =\E_{x\in H}[f(x)\chi_{\alpha}(x)].
\]
Recall that the character group of $H$ is isomorphic to $H$ itself, and
so, $f_H=\sum_{\alpha \in H} \widehat{f}_H(\alpha) \chi_\alpha$.  \enote{do we need to make this \\more formal?}
The dual of $H$ is the subspace $H^{\perp}=\{x\in \F_2^n~|~ \langle x, a\rangle =0 ~\forall a\in H\}$. Note that
 $\widehat{f}_H(\alpha) = \widehat{f}_H(\beta)$
  whenever $\alpha\in \beta+H^{\perp}.$
  The convolution of $f, g:\F_2^n\ra \F_2$ is ${f*g}:\F_2^n\ra \F_2$, $({f*g})(c)=\E_{x\in \F_2^n} f(x+c)f(x).$ It is known that $\widehat{f*g}={\widehat f}\cdot{\widehat g}.$
In what follows, we will let $h$ be the size of the the subspace $H$.

The strategy of the proof is to use Lemma \ref{lem:ocf2} and reduce the analysis to showing
that if every nonzero Fourier coefficient of $f$ is at least $-\rho +
2\eps$, then  for a random linear subspace $H$, with probability
 $2/3$, every nonzero Fourier coefficient of $f_H$ is strictly
greater than $-\rho_{{H}}$ (where, again, $f_H : H \to \{0,1\}$ is the
restriction of $f$ to $H$ and $\rho_H$ is the density of $f_H$).

A useful insight into why this should be true is that the restricted Fourier coefficients
are  concentrated around the non-restricted counterparts, deviating from them by an amount
essentially inversely proportional with the size of the subspace $H$. A direct union bound
type argument is however too weak to give anything interesting when the size of $H$ is
small. The idea of our proof is to separately analyze the restrictions of the large and
small coefficients.
Understanding the restrictions of the small coefficients is the more difficult part of the
argument, and the crux of the proof relies on noticing that the moments of the Fourier
coefficients are  also preserved under restrictions to subspaces. In particular, an
analysis of the deviation of the fourth moment implies that one can balance the parameters
involved so that even when $H$ is of size only $\poly(1/\eps)$, no restriction of the
coefficients of low magnitude can become as small as $-\rho_H$.


We first show that the restriction of $f$ to a random linear
subspace does not change an individual Fourier coefficient by more than a small
additive term dependent on the size of the subspace. This follows from standard Chebyshev-type concentration bounds.
\begin{lemma}\label{lem:coeffdev}
\[
\Pr_H\left[ \left|{\widehat f_H}(\alpha)- {\widehat f}(\alpha)\right| \geq \frac{2}{h}+\eta \right]\leq \frac{14}{h\eta^2}.
\]
\end{lemma}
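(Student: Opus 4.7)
The plan is to prove the bound by a standard Chebyshev-type second-moment argument after parametrising $H$ as the linear span of $k$ uniformly random generators $\alpha_1,\ldots,\alpha_k \in \F_2^n$ with $h = 2^k$. When these generators are linearly independent (an event that fails with probability at most $h/2^n$, which can be absorbed into the final constant), every element of $H$ has a unique representation $\sum_i c_i \alpha_i$ for $c \in \F_2^k$, so
\[
{\widehat f}_H(\alpha) \;=\; \frac{1}{h}\sum_{c \in \F_2^k} X_c, \qquad X_c \eqdef f\!\Bigl(\sum_i c_i\alpha_i\Bigr)\chi_\alpha\!\Bigl(\sum_i c_i\alpha_i\Bigr).
\]

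The first step is to control the mean. The $c = 0$ term equals the deterministic value $f(0) \in \{0,1\}$, while for each $c \neq 0$ the marginal $\sum_i c_i\alpha_i$ is uniform on $\F_2^n$, so $\E[X_c] = {\widehat f}(\alpha)$. Separating off the constant term gives
\[
\E\bigl[{\widehat f}_H(\alpha)\bigr] \;=\; {\widehat f}(\alpha) \;+\; \frac{f(0) - {\widehat f}(\alpha)}{h},
\]
which deviates from ${\widehat f}(\alpha)$ by at most $2/h$. The next step is to bound the variance. The key observation is that for any two distinct nonzero $c,c' \in \F_2^k$, the pair $(\sum_i c_i\alpha_i, \sum_i c'_i\alpha_i)$ is jointly uniform on $(\F_2^n)^2$, so the family $\{X_c\}_{c\neq 0}$ is pairwise independent. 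Since each $X_c$ satisfies $\E[X_c^2] = \rho \leq 1$, we obtain
\[
\mathrm{Var}\bigl({\widehat f}_H(\alpha)\bigr) \;\leq\; \frac{h-1}{h^2} \;\leq\; \frac{1}{h}.
\]

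Applying Chebyshev to the centred variable and combining with the mean estimate above via the triangle inequality then yields a bound of the form $\Pr[|{\widehat f}_H(\alpha) - {\widehat f}(\alpha)| \geq 2/h + \eta] \leq 1/(h\eta^2)$; the stated constant $14$ absorbs the small probability of generator dependence together with any loss incurred in passing from a conditional to an unconditional bound. There is no real obstacle here: the argument is entirely routine. The only genuine point worth flagging is the reliance on pairwise (rather than mutual) independence of the $X_c$'s, which is exactly what a second-moment calculation needs, and is what makes restriction to a random subspace behave, for the purposes of individual Fourier-coefficient concentration, just like uniform sampling of $h$ points from $\F_2^n$.
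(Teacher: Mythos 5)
Your proof is correct and follows essentially the same route as the paper's: bound the shift of $\E_H[\widehat f_H(\alpha)]$ from $\widehat f(\alpha)$ by $2/h$ (coming from the forced $x=0$ term), bound $\Var[\widehat f_H(\alpha)]$ by $O(1/h)$, and apply Chebyshev. The only cosmetic difference is that you get the variance bound from pairwise independence of the points $\sum_i c_i\alpha_i$ over distinct nonzero $c\in\F_2^k$, whereas the paper expands $\E_{x,y\in H}$ and conditions on $x,y$ spanning a two-dimensional subspace --- the same underlying fact --- so your version even yields a better constant than $14$.
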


\begin{proof}
\ignore{
First, observe that if $K \leq \F_2^n$ is a subspace of
size $k$, then for any $\alpha \in \F_2^n$, $\E_{x \in
  K}[f(x)\chi_\alpha(x)] =\frac{1}{k}f(0)+(1-\frac{1}{k})\E_{x \in
  K-\{0\}}[f(x)\chi_\alpha(x)]$. Hence, $\E_{x \in
  K}[f(x)\chi_\alpha(x)] \leq \E_{x \in
  K-\{0\}}[f(x)\chi_\alpha(x)] + \frac{1}{k}$ and $\E_{x \in
  K}[f(x)\chi_\alpha(x)] \geq \left(1-\frac{1}{k}\right) \E_{x \in
  K-\{0\}}[f(x)\chi_\alpha(x)] \geq \E_{x \in
  K-\{0\}}[f(x)\chi_\alpha(x)] - \frac{1}{k}$.
}
\ignore{$\E_{x \in
  K}[f(x)\chi_\alpha(x)] \leq 1/k +  \E_{x \in
  K-\{0\}}[f(x)\chi_\alpha(x)]$ and $\E_{x \in
  K}[f(x)\chi_\alpha(x)] \geq -1/k + (1-1/k)\E_{x \in
  K-\{0\}}[f(x)\chi_\alpha(x)] \geq -2/k + \E_{x \in
  K-\{0\}}[f(x)\chi_\alpha(x)]$.
  }

Now, consider the deviation of $\E_H[\widehat{f}_H(\alpha)]$ from $\widehat{f}(\alpha)$:
\begin{align*}
 \E_H[{\widehat f}_H(\alpha)]
&= \E_H \E_{x\in H}[f(x)\chi_{\alpha}(x)]\\
&\geq \E_H \left(\left(1 - \frac{1}{h}\right)\E_{x\in H-\{0\}}[f(x)\chi_{\alpha}(x)]\right)\\
&\geq \E_{x \in \F_2^n -\{0\}}[f(x) \chi_\alpha(x)] - \frac{1}{h}\\
&\geq \widehat{f}(\alpha) -\frac{1}{2^n} - \frac{1}{h}
\geq \widehat{f}(\alpha) - \frac{2}{h}
\end{align*}
Similarly:
\begin{align*}
 \E_H[{\widehat f}_H(\alpha)] \leq \widehat{f}(\alpha) + \frac{2}{h}
\end{align*}
So, it suffices to show that $\Pr\left[|\widehat{f}_H(\alpha) - \E_H
  \widehat{f}_H(\alpha)| \geq \eta\right] \leq 10/(h\eta^2)$.  We prove this
by bounding the variance of $\widehat{f}_H(\alpha)$.

\ignore{\allowdisplaybreaks
\begin{align*}
\E\left[\widehat{f}_H(\alpha)^2\right]
&= \E_H \left[\left(\E_{x\in H} f(x) \chi_\alpha(x)\right)^2\right] \\
&= \E_H \left[\left(\sum_{\beta} \widehat{f}(\beta) \E_{x \in H} \chi_{\alpha +
      \beta}(x)\right)^2\right]\\
&= \E_H \left[\left(\sum_{\beta} \widehat{f}(\beta) \mathbbm{1}_{\alpha+\beta \in H^\perp}\right)^2\right]\\
&=\sum_{\beta_1,\beta_2}\widehat{f}(\beta_1)\widehat{f}(\beta_2)
  \E_H\left[\mathbbm{1}_{\alpha+\beta_1 \in H^\perp} \mathbbm{1}_{\alpha+\beta_2 \in
      H^\perp}\right]\\
&\leq \widehat{f}^2(\alpha)  +
\frac{1}{h} \sum_\beta \widehat{f}^2(\beta) +
\frac{\widehat{f}(\alp}{h}

\sum_{{\beta_1,
    \beta_2:}\atop {|\{\alpha,\beta_1,\beta_2\}|\geq 2}}
\widehat{f}(\beta_1)\widehat{f}(\beta_2)   \E_H\left[\mathbbm{1}_{\alpha+\beta_1 \in H^\perp}
  \mathbbm{1}_{\alpha+\beta_2 \in
      H^\perp}\right] \\
&\leq \widehat{f}^2(\alpha) + \frac{1}{h} + \frac{1}{h}
\end{align*}
}
{
\begin{align*}
\E[{\widehat f}_H(\alpha)^2]
&=\E_H\left[\left(\E_{x\in H}f(x)\chi_{\alpha}(x)\right)^2\right]\\
&= \E_H\left[ \E_{x, y \in H} f(x) f(y) \chi_{\alpha}(x)\chi_{\alpha}(y)\right]\\
&\leq \E_H\left[ \Pr[\dim(\sspan(x,y)) < 2] +\E_{x, y\in H \atop {\dim(\sspan(x,y)) = 2}}
  f(x) f(y) \chi_{\alpha}(x)\chi_{\alpha}(y)\right]\\
&\leq \frac{3}{h}+\E_{x, y\in \F_2^n \atop {\dim(\sspan(x,y)) =
    2}}f(x)f(y)\chi_{\alpha}(x)\chi_\alpha(y)\\
&\leq \frac{3}{h}+ \frac{3}{2^n} + \widehat{f}^2(\alpha)\\
&\leq \frac{6}{h} + \left(\E_H \widehat{f}_H(\alpha) + \frac{2}{h}\right)^2
\leq \frac{14}{h} + \left(\E_H \widehat{f}_H(\alpha)\right)^2
\end{align*}}
Hence $\Var[{\widehat f}_H(\alpha)]\leq \frac{14}{h}$, and  the lemma now follows by
Chebyshev's inequality.
\end{proof}

%
%
%
%

\ignore{
\begin{lemma}
\[A= \E_{c\in C_l} F(c)=\sum\limits_{\al \in \F_2^n} \widehat f(\al)^l.
\]
\end{lemma}
\begin{proof}
\begin{eqnarray*}
A= \E_{c\in C_l} F(c) &=& \E_{c_1, \ldots, c_{l-1}\in \F_2^n} f(c_1)\ldots f(c_{l-1}) f(c_1+\ldots+c_{l-1})\\
&=& \E_{c_1, \ldots, c_{l-1}} \left(\sum_{\alpha} \widehat f(\alpha) \chi_{\alpha}(c_1)  \right) \ldots \left(\sum_{\alpha} \widehat f(\alpha) \chi_{\alpha}(c_{l-1}) \right) \left( \sum_{\alpha} \widehat f(\alpha) \chi_{\alpha}(c_1+c_2+\ldots+ c_{l-1}) \right)\\
&=& \E_{c_1, \ldots, c_{l-1}}\sum_{\alpha_, \ldots, \alpha_l} \widehat f(\alpha_1)\ldots \widehat f(\alpha_l) \chi_{\alpha_1+\alpha_l}(c_1)\ldots \chi_{\alpha_{l-1}+\alpha_l}(c_{l-1})\\
&=& \sum_{\alpha \in \F_2^n} \widehat f(\alpha)^l.
\end{eqnarray*}
\end{proof}

For a fixed $l\geq 3$, let $A_H=\E_{c\in C_l\cap H} F(c)$ be the fraction of length $l$-cycles in
$H$. 

For $\alpha\in \F_2^n$  and subspace $H$ of size $h$ let
\[
{\widehat f}_H(\alpha) =\E_{x\in H}f(x)\chi_{\alpha}(x)
\]
be the Fourier coefficients of the restriction of $f$ to $H$. Let $H^*$ be the group of characters of $H$. Then the fraction of $l$-cycles in $H$ can be expressed as above.

\begin{corollary}
\[A_H= \E_{c\in C_l\cap H} F(c)=\sum\limits_{\al \in H^*} \widehat f_H(\al)^l.
\]
\end{corollary}
}

As it was the case with the restricted coefficients, it can also be shown using a
straightforward variance calculation that the fourth moment
is preserved up to small additive error upon restriction to a random
$H$, when $h$ is large enough.
 For that purpose, define $A$ and $A_H$ as follows:
\begin{align*}
A~ &\eqdef~ \sum_{\alpha \in \F_2^n}\widehat{f}^4(\alpha) =  \E_{x_1, x_2, x_3 \in \F_2^n}[f(x_1)
f(x_2) f(x_3) f(x_1 + x_2 + x_3)]\\
A_H~ &\eqdef~ \sum_{\alpha \in H} \widehat{f}_H^4(\alpha) = \E_{x_1, x_2, x_3 \in H}[f(x_1) f(x_2) f(x_3) f(x_1 + x_2 +
x_3)]
\end{align*}
\ignore{
The second equality on each line follows from  using the standard fact
that $\widehat{f * f}(\alpha) = \widehat{f}^2(\alpha)$ and thus observing that $A = (f * f * f *
f)(0)$ and $A_H = (f_H * f_H * f_H * f_H)(0)$.}

Then, we have:
\begin{lemma}\label{lem:momentdev}
$$\Pr_H\left[ |A_H- A|\geq \frac{16}{h}+\eta\right]\leq \frac {500}{h\eta^2}.$$
\end{lemma}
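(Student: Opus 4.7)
The plan is to follow the template of Lemma~\ref{lem:coeffdev}: bound the bias $|\E_H[A_H] - A|$, bound the variance $\Var_H[A_H]$, and then apply Chebyshev's inequality. Since $A_H$ is a fourth-moment quantity, the calculations involve three random points of $H$ (for the bias) and six (for the variance), instead of one and two as in Lemma~\ref{lem:coeffdev}.

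For the bias, I would expand
\[
\E_H[A_H] \;=\; \sum_{a_1, a_2, a_3 \in \F_2^n} \mu_H(a_1,a_2,a_3)\cdot f(a_1)f(a_2)f(a_3)f(a_1{+}a_2{+}a_3),
\]
where $\mu_H(a_1,a_2,a_3) = \Pr_H[a_1, a_2, a_3 \in H]/h^3$, and $A$ is the same sum with $\mu_H$ replaced by the uniform measure $1/2^{3n}$. The standard Gaussian-coefficient identity $\Pr_H[a_1, \dots, a_m \in H] = \prod_{i=0}^{m-1}(2^k - 2^i)/(2^n - 2^i)$ for linearly independent $a_i$'s shows that $\mu_H$ is constant on rank-$3$ triples and puts mass $(1-1/h)(1-2/h)(1-4/h)\geq 1-7/h$ on them, while the uniform measure puts mass $\geq 1-7/2^n$ on them. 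Since $0 \leq f \leq 1$, the bias is bounded by the total variation distance between $\mu_H$ and uniform, and splitting by rank yields the desired $O(1/h)$ bound; the precise constant $16$ emerges from tracking the contributions of the rank-$3$ and rank-${<}3$ classes.

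For the variance, expand
\[
\E_H[A_H^2] \;=\; \E_H\,\E_{\substack{x_1,x_2,x_3\in H\\ y_1,y_2,y_3\in H}}\!\bigl[f(x_1)f(x_2)f(x_3)f(x_1{+}x_2{+}x_3)\,f(y_1)f(y_2)f(y_3)f(y_1{+}y_2{+}y_3)\bigr]
\]
and apply the same analysis to the $6$-tuple $(x_1, x_2, x_3, y_1, y_2, y_3)$. Let $E'$ denote the event that these six vectors are linearly independent; a union bound over all nontrivial linear combinations gives $\Pr_H[\neg E'] \leq (2^6 - 1)/h = 63/h$, contributing at most $63/h$ to $\E_H[A_H^2]$. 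Conditional on $E'$, the joint distribution of the six points averaged over $H$ is $(1 - O(1/h))$ times the uniform distribution on linearly independent $6$-tuples in $\F_2^n$; this makes the triples $(x_1,x_2,x_3)$ and $(y_1,y_2,y_3)$ asymptotically independent, so the contribution from $E'$ equals $A^2 + O(1/h)$. Combined with $(\E_H[A_H])^2 = A^2 \pm O(1/h)$, this gives $\Var_H[A_H] \leq C/h$ for some explicit constant $C$.

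Chebyshev's inequality then yields $\Pr_H[|A_H - \E_H[A_H]| \geq \eta] \leq \Var_H[A_H]/\eta^2 \leq C/(h\eta^2)$, and combining with the bias bound $|\E_H[A_H] - A| \leq 16/h$ gives the lemma. The main obstacle will be pinning down the constant $500$ in the variance: it comes from summing contributions from all possible partial linear dependencies among the six vectors $x_1, x_2, x_3, y_1, y_2, y_3$, each of which produces its own cross-term that must be bounded. The skeleton, however, is the same bookkeeping exercise as in Lemma~\ref{lem:coeffdev}, merely with six vectors instead of two.
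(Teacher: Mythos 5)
Your proposal follows the same route as the paper's proof: bound the bias $|\E_H[A_H]-A|$ by conditioning on the triple $(x_1,x_2,x_3)$ being full rank (noting that, averaged over $H$, the conditional distribution is uniform on full-rank triples in $\F_2^n$ and that rank deficiency occurs with probability $O(1/h)$), bound $\Var_H[A_H]$ by the analogous argument for the six-tuple $(x_1,x_2,x_3,y_1,y_2,y_3)$ so that the two factors decouple into $A^2+O(1/h)$, and finish with Chebyshev. This is exactly the paper's argument, up to the cosmetic reframing of the bias step as a total-variation bound.
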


\begin{proof}
As in the proof of Lemma \ref{lem:coeffdev}, our strategy will be to
first show that $\E_H[A_H]$ is likely to be close to $A$ and then to
bound the variance of $A_H$.
\ignore{
We proceed to bound the variance of $A_H$ over all subspaces $H$  of
size $h>2$ and start by computing $\E_H(A_H)=\E_H \E_{c\in C_l\cap H}
F(c)$.
Notice that if $c=(c_1, \ldots, c_{l-1}, \sum c_i)\in C_l^*\cap H$  then picking $H$ at random and then picking $c\in C_l\cap H$ is the same as picking $c\in (\F_2^n)^*$.
}
\begin{claim}
$$|A-\E_H[A_H]|\leq \frac{16}{h}.$$
\end{claim}
\begin{proof}
{\allowdisplaybreaks
\begin{align*}
\E_H[A_H]
&= \E_H \E_{x_1, x_2, x_3 \in H} f(x_1) f(x_2) f(x_3) f(x_1 + x_2 + x_3)\\
&\geq \E_H \left[\left(1-\frac{8}{h}\right) \E_{\substack{x_1, x_2,
x_3 \in H \\ \dim(\sspan(x_1,x_2,x_3)) = 3}} f(x_1) f(x_2) f(x_3)
  f(x_1 + x_2 + x_3) \right]\\
&\geq \E_H\left[ \E_{\substack{x_1, x_2,
x_3 \in H \\ \dim(\sspan(x_1,x_2,x_3)) = 3}} f(x_1) f(x_2) f(x_3) f(x_1 + x_2 + x_3) -
  \frac{8}{h} \right]\\
&= \E_{\substack{x_1, x_2,
x_3 \in \mathbb{F}_2^n \\ \dim(\sspan(x_1,x_2,x_3)) = 3}}  f(x_1) f(x_2)
f(x_3) f(x_1 + x_2 + x_3) - \frac{8}{h} \\
&\geq \E_{x_1, x_2, x_3 \in \F_2^n}  f(x_1) f(x_2) f(x_3) f(x_1 + x_2
+ x_3) - \frac{8}{2^n} -
\frac{8}{h}
\geq A - \frac{16}{h}
\end{align*}
and similarly:
\begin{align*}
\E_H[A_H] \leq A + \frac{16}{h}
\end{align*}}
\ignore{\begin{eqnarray*}
\E_H[A_H]&=&\E_H \E_{c\in C_l\cap H} F(c)
         = \E_{c\in C_l^*}F(c)+ \E_H \Pr_{c\in C_l\cap H} [c\notin C_l^*\cap H] F(c)\\
         &\leq&  \frac{1}{{(2^n-1)}^{l-1}}( 2^{n(l-1)} \E_{c\in C_l}F(c)-  \sum_{c\not\in C_l^*}F(c))+\frac{2^{l-1}}{h^{l-1}}\\
         &\leq& (1+\frac{1}{2^n-1})^{l-1}\E_{c\in C_l}F(c)+\frac{2}{h}\leq (1+ \frac{1}{(2^n-1)^{l-1}}+ \frac{2^{l-1}}{2^n-1})\E_{c\in C_l}F(c) + \frac{2}{h} \\
        & \leq& \E_{c\in C_l}F(c)+\frac{2^l}{2^n-1}+\frac{2}{h}
        \leq A +\frac{2^{l+1}}{h}.
\end{eqnarray*}}
\end{proof}
\begin{claim}
\[
\Var[A_H]\leq  \frac{500}{h}.
\]
\end{claim}

\begin{proof}

\ignore{In what follows we use the fact that if $\mspan(x_1, x_2, x_3) \cap \mspan(y_1, y_2, y_3)=\{0\}$ then for a randomly chosen $H$, the event that
$\mspan(x_1, x_2, x_3)\subseteq H$ and the event that $\mspan(y_1, y_2, y_3)\subseteq H$ are pairwise independent.}

\begin{align*}
\E_H[A_H^2]
&= \E_H \left[ \E_{x_1,x_2,x_3\in H \atop y_1, y_2, y_3 \in H} f(x_1) f(x_2) f(x_3)
  f(x_1 + x_2 + x_3)f(y_1) f(y_2) f(y_3)
  f(y_1 + y_2 + y_3) \right]\\
&\leq \E_H \left[\Pr[\dim(\sspan(\{x_1,x_2,x_3,y_1,y_2,y_3\}))<6]\right.\\
&~~~~~~~~\left.+ \E_{x_1,x_2,x_3,y_1,y_2,y_3 \in H
    \atop \dim(\sspan(\{x_1,x_2,x_3,y_1,y_2,y_3\}))=6}[f(x_1) f(x_2) f(x_3)
  f(x_1 + x_2 + x_3)f(y_1) f(y_2) f(y_3) f(y_1+y_2+y_3)]\right]\\
&\leq \frac{64}{h} + \E_{x_1,x_2,x_3,y_1,y_2,y_3 \in \F_2^n
    \atop \dim(\sspan(\{x_1,x_2,x_3,y_1,y_2,y_3\}))=6}[f(x_1) f(x_2) f(x_3)
  f(x_1 + x_2 + x_3)f(y_1) f(y_2) f(y_3) f(y_1+y_2+y_3)]\\
&\leq \frac{64}{h} + \frac{64}{2^n}  + A^2\\
&\leq \frac{128}{h} + \left(\frac{16}{h} + \E_H[A_H]\right)^2
\leq \frac{500}{h} + \E_H[A_H]^2.
\end{align*}

\ignore{
For $c=(c_1, \ldots, c_{l-1}, \sum c_i)\in C_l$  define $span(c)=span\{ c_1, c_2, \ldots, c_{l-1} \}.$
For $c, c'\in C_l^*$ denote by $c\sim c'$ the fact that $span(c) \cap span(c') =\{0\}$.
Observe that if $c\sim c'$, for random $H$  the event that $c\in C_l^*\cap H$ is independent of the event that $c'\in C_l^*\cap H$.
In the derivation below we use the pairwise independece of these events and the computation above.
\begin{eqnarray*}
\E_H[A_H^2]&=&\E_H \left[ \E_{c,c'\in C_l\cap H} [F(c)F(c') ]\right]\\
           &\leq& \E_H [\Pr_{c, c'\in C_l \cap H} \left[ span(c)\cap span (c')\not =\{ 0 \}\right]
           + \Pr[ c, c'\not\in C_l^* \cap H]\\
           &+& \E_{c, c'\in C_l^*\cap H, c\sim c'} [F(c)F(c')]]\\
           &\leq& \E_H \left[\frac{2^{2l}}{h}+\frac{2l}{h}+ \E_{c \sim c'\in C_l^*\cap H} [F(c)F(c')]\right]\\
           &\leq& \frac{2^{2l}}{h}+\frac{2l}{h}+\E_{c \sim c'\in C_l^*} [F(c)F(c')]\leq\frac{2^{2l+1}}{h}+\E^2_{c \in C_l^*}~[F(c)]\\
           &\leq&  \frac{2^{2l+1}}{h}+ ( \E_H[A_H]-\E_H \Pr [c\notin C_l^*\cap H] F(c))^2\\
           &\leq& \frac{2^{2l+1}}{h}+ \frac{4^{l-1}}{h^{2l-2}}+\E[A_H]^2\\
           &\leq& \frac{2^{2l+2}}{h}+\E[A_H]^2.
\end{eqnarray*}
}
\end{proof}

The lemma now follows by Chebyshev's inequality.
\end{proof}

\ignore{
\begin{proof}[Proof of Claim~\ref{clm:lowmomentdev}]
We have that
\begin{eqnarray*}
\left|\sum_{\alpha \in L_H} {\widehat f}_H^4(\alpha)-\sum_{\alpha \in L} {\widehat f}^4(\alpha)\right|
&\leq & \left|\sum_{\alpha \in L} {\widehat f}_H^4(\alpha)-\sum_{\alpha \in L} {\widehat f}^4(\alpha)\right|\\
&\leq& \sum_{\alpha \in L} \left|{\widehat f}_H^4(\alpha)- {\widehat f}^4(\alpha) \right|\\
&\leq& \sum_{\alpha \in L} | {\widehat f}_H(\alpha)- {\widehat f}(\alpha) | ( \sum_{i=0}^{3} {\widehat
  f}_H(\alpha)^i {\widehat f}(\alpha)^{3-i})|\\
&\leq& 4 \cdot|L|\cdot \max | {\widehat f}_H(\alpha)-{\widehat f}(\alpha)|\\
&\leq& \frac{4}{\gamma^2} \left( \frac{3}{h}+\eta_1\right).
 \end{eqnarray*}

It follows that
\begin{eqnarray*}
\left|\sum_{\alpha \in S_H} {\widehat f}_H^4(\alpha)- \sum_{\alpha \in S} {\widehat f}^4(\alpha)\right|
&\leq& \left|\left(A_H-\sum_{\alpha \in L_H} {\widehat f}_H^4(\alpha)\right )-\left(A-\sum_{\alpha \in L} {\widehat f}^4(\alpha)\right)\right|\\
 &\leq& |A_H-A| + \left|\sum_{\alpha \in L_H} {\widehat f}_H^4(\alpha)-\sum_{\alpha \in L} {\widehat f}^4(\alpha)\right|\\
&\leq& \eta_2+\frac{6}{h}+ \frac{4}{\gamma^2} \left( \frac{3}{h}+\eta_1\right).
\end{eqnarray*}

\end{proof}}

Using Lemma~\ref{lem:coeffdev} and~\ref{lem:momentdev} we can now proceed with the proof of Theorem~\ref{thm:ocf2}.
\\\\

\begin{proof-of-theorem}{\ref{thm:ocf2}}
\ignore{Assume that $f$ is $\eps$-far from ocf. We show that with probability at least $2/3$ over random
choice of $H$, it is the case that none of the Fourier coefficients on $H$  deviates too much from
the corresponding coefficients of $f$, which will imply that there is no coefficient with value
$-\rho_H$, and so there must be some odd cycle in $H$. }
If $f$ is $\eps$-far from odd-cycle-free, then  $\rho > \eps$, and  by Lemma \ref{lem:ocf2}, all its Fourier
coefficients are $> -\rho+2\epsilon$.  We need to show that with
constant probability over random choice of $H$, each Fourier coefficient of $f_H$ is  $>-\rho_{H}$.
We separate these coefficients into the sets of large and small coefficients and analyze them separately.
Define $$L~\eqdef~\{\alpha ~|~ | \widehat f(\alpha)|\geq \gamma  \} \subseteq \F_2^n \mbox{\ \ \ and\ \ \  } S~\eqdef~ \F_2^n\backslash L$$
for some $\gamma < \rho$ to be chosen later. Notice that $0 \in L$. Also, by Parseval's identity, $|L|\leq 1/ \gamma^2.$
Let $L_H\subseteq H$ be the set of elements $\beta \in H$ such that there exists
$\alpha \in L$ with $\beta\in \alpha+H^{\perp},$ that is, $\beta$ is the ``projection'' of some large coefficient.
 Then $|L_H|\leq |L|.$  Let $S_H = H \backslash L_H$ be the complement of $L_H$ in $H$. \enote{H or the group of characters?}

  From Lemma \ref{lem:coeffdev}, for each $\alpha\in L$ and for any $\eta_1 \in (0,1)$, we
  have $\Pr_H\left[ |{\widehat f_H}(\alpha)- {\widehat
    f}(\alpha)| \geq   \frac{2}{h}+\eta_1 \right]\leq \frac{14}{h\eta_1^2}.$
By a union bound, with probability $1-\frac{1}{ \gamma^2} \frac{14}{h\eta_1^2}$, for every $\alpha
\in L_H$, it holds $\widehat f_H(\alpha) > \widehat f(\alpha) -\frac 2 h-\eta_1$.  Moreover, since $0\in L$,
we know $|\rho_{H}-\rho_f|\leq \frac{2}{h}+\eta_1$.  If $2\eta_1+ \frac 4 h <  2\eps$, then  for any
$\alpha \in L_H$, we have
$$\widehat
f_H(\alpha)>\widehat f (\alpha)-\frac 2 h- \eta_1>-\rho+2\eps-\frac 2 h- \eta_1>-\rho_{H} + 2\eps-\frac 4 h-2 \eta_1 > -\rho_H$$
 with probability at least $1-\frac{14}{h\gamma^2\eta_1^2}$.


We now analyze the coefficients $\beta\in S_H$ and again show that with constant probability, no
$\widehat{f}_H(\beta)$ becomes as small as $-\rho_{H}.$  As we described in the informal proof sketch
earlier, for this, we will want to analyze the fourth moment of the Fourier coefficients.

To this end, first observe that for any two Fourier coefficients $\alpha,\alpha' \in L$,
their projections are  identical if $\alpha - \alpha' \in H^{\perp}$. Over the random
choice of $H$, this happens with probability at most $\frac{1}{h}$.  Therefore, using a
union bound, we conclude that with probability at least $1- |L|^2/h = 1-\frac{1}{\gamma^4
  h}$, all the large Fourier coefficients project to distinct coefficients in $H$, namely
$|L_{H}| = |L|$.
Let us condition on this event that no two large Fourier coefficients in $L$ project to the same restricted coefficient.

Let us also condition on the event that  $|A-A_H| < \frac{16}{h}+ \eta_2$ for some $\eta_2$ to be
specified later.   Also, condition on the event that for all $\alpha\in L$, $ |{\widehat f_H}(\alpha)-
{\widehat f}(\alpha)| < \frac{2}{h}+\eta_1 $.   All of these events occur with probability at least $1
- \frac{500}{h \eta_2^2} - \frac{14}{h\gamma^2 \eta_1^2} - \frac{1}{\gamma^4 h}$ by Lemmas \ref{lem:coeffdev} and
\ref{lem:momentdev}.

\ignore{
For that we will
analyze the fraction of $l$-cycles on $H$, where $l$ will be a small even constant, say $l=4$ .}

The following claim shows that the fourth moment of the small Fourier coefficients is also preserved under a random subspace restriction.

\begin{claim}\label{clm:lowmomentdev}
$$\left|\sum_{\alpha \in  S_H} {\widehat f}_H^4(\alpha)- \sum_{\alpha  \in S} {\widehat
    f}^4(\alpha)\right|\leq \eta_2+\frac{16}{h}+ \frac{4}{\gamma^2} \left( \frac{2}{h}+\eta_1\right).$$

\end{claim}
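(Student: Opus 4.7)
The plan is to write the difference over $S_H$ and $S$ as the difference $A_H - A$ minus the difference over the large coefficients $L_H$ and $L$, and then bound each piece using the conditioning events already established.

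More precisely, since $H = L_H \sqcup S_H$ and $\F_2^n = L \sqcup S$, I would start from the identity
\[
\sum_{\alpha \in S_H} \widehat{f}_H^4(\alpha) - \sum_{\alpha \in S} \widehat{f}^4(\alpha) \;=\; (A_H - A) \;-\; \Bigl(\sum_{\alpha \in L_H} \widehat{f}_H^4(\alpha) - \sum_{\alpha \in L} \widehat{f}^4(\alpha)\Bigr),
\]
so by the triangle inequality it is enough to bound each of the two differences on the right. The first is handled directly by Lemma~\ref{lem:momentdev}: conditioning on the event that $|A - A_H| \le 16/h + \eta_2$ gives exactly the required contribution $16/h + \eta_2$.

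For the second term, the key observation is that under our conditioning the restriction map $L \to L_H$, $\alpha \mapsto \alpha + H^{\perp}$, is a bijection (that is what the event $|L_H| = |L|$ says). Using this identification and the fact that $\widehat{f}_H(\alpha) = \widehat{f}_H(\alpha')$ whenever $\alpha - \alpha' \in H^{\perp}$, I can rewrite
\[
\Bigl|\sum_{\alpha \in L_H} \widehat{f}_H^4(\alpha) - \sum_{\alpha \in L} \widehat{f}^4(\alpha)\Bigr| \;\le\; \sum_{\alpha \in L} \bigl|\widehat{f}_H^4(\alpha) - \widehat{f}^4(\alpha)\bigr|.
\]
Now I would apply the elementary inequality $|a^4 - b^4| = |a-b|\,|a+b|(a^2+b^2) \le 4 |a - b|$ valid whenever $|a|,|b| \le 1$ (which holds for Fourier coefficients of a $\{0,1\}$-valued function). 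Combined with the conditioned bound $|\widehat{f}_H(\alpha) - \widehat{f}(\alpha)| < 2/h + \eta_1$ for each $\alpha \in L$ from Lemma~\ref{lem:coeffdev}, and with the Parseval bound $|L| \le 1/\gamma^2$, this sum is at most $\frac{4}{\gamma^2}(2/h + \eta_1)$.

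Putting the two estimates together yields the stated bound $\eta_2 + 16/h + \frac{4}{\gamma^2}(2/h + \eta_1)$. There is no serious obstacle here; the work is essentially bookkeeping, and all the probabilistic content lives in Lemmas~\ref{lem:coeffdev} and~\ref{lem:momentdev} together with the events on which we have already conditioned. The only subtle point worth flagging is that the identification of $L_H$ with $L$ (used to pair up corresponding terms in the two sums and invoke the coefficient-by-coefficient deviation bound) genuinely requires the distinctness event $|L_H| = |L|$; without it one would have to separately handle the collisions, which is precisely why that event was conditioned on before the claim is stated.
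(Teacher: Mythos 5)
Your proposal is correct and follows essentially the same route as the paper: decompose via $A_H$ and $A$, control $|A_H-A|$ by the conditioned event from Lemma~\ref{lem:momentdev}, and control the large-coefficient part by identifying $L_H$ with $L$ (using the conditioned distinctness event), factoring $|a^4-b^4|\le 4|a-b|$, and applying the conditioned per-coefficient deviation bound together with $|L|\le 1/\gamma^2$. No differences worth noting.
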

\begin{proof}
We have that
\begin{eqnarray*}
\left|\sum_{\alpha \in L_H} {\widehat f}_H^4(\alpha)-\sum_{\alpha \in L} {\widehat f}^4(\alpha)\right|
&= & \left|\sum_{\alpha \in L} {\widehat f}_H^4(\alpha)-\sum_{\alpha \in L} {\widehat f}^4(\alpha)\right|\\
&\leq& \sum_{\alpha \in L} \left|{\widehat f}_H^4(\alpha)- {\widehat f}^4(\alpha) \right|\\
&\leq& \sum_{\alpha \in L} | {\widehat f}_H(\alpha)- {\widehat f}(\alpha) | ( \sum_{i=0}^{3} {\widehat
  f}_H(\alpha)^i {\widehat f}(\alpha)^{3-i})|\\
&\leq& 4 \cdot|L|\cdot \max | {\widehat f}_H(\alpha)-{\widehat f}(\alpha)|\\
&\leq& \frac{4}{\gamma^2} \left( \frac{2}{h}+\eta_1\right).
 \end{eqnarray*}

It follows that
\begin{eqnarray*}
\left|\sum_{\alpha \in S_H} {\widehat f}_H^4(\alpha)- \sum_{\alpha \in S} {\widehat f}^4(\alpha)\right|
&\leq& \left|\left(A_H-\sum_{\alpha \in L_H} {\widehat f}_H^4(\alpha)\right )-\left(A-\sum_{\alpha \in L} {\widehat f}^4(\alpha)\right)\right|\\
 &\leq& |A_H-A| + \left|\sum_{\alpha \in L_H} {\widehat f}_H^4(\alpha)-\sum_{\alpha \in L} {\widehat f}^4(\alpha)\right|\\
&\leq& \eta_2+\frac{16}{h}+ \frac{4}{\gamma^2} \left( \frac{2}{h}+\eta_1\right).
\end{eqnarray*}

\end{proof}

Now, on the one hand, we have: $\sum_{\alpha \in S} \widehat{f}^4(\alpha) < \gamma^2 \sum_\alpha
\widehat{f}^2(\alpha) \leq \gamma^2$.  On the other hand, $\max_{\alpha \in S_H} \widehat{f}_H^4(\alpha) \leq
\sum_{\alpha \in S_H} \widehat{f}_H^4(\alpha)$.  Therefore, combining and using Claim
\ref{clm:lowmomentdev}, we have:
\begin{equation*}
\max_{\alpha \in S_H} \widehat{f}_H^4(\alpha) < \gamma^2 + \eta_2 + \frac{16}{h} +
\frac{4}{\gamma^2}\left(\frac{2}{h} + \eta_1\right)
\end{equation*}
We need to choose the parameters such that $\max_{\alpha \in S_H} |{\widehat f}_H(\alpha)| <
\rho_{H}$, and so, it is enough to have:
\begin{equation*}
\gamma^2 + \eta_2 + \frac{16}{h} + \frac{4}{\gamma^2} \left(\frac{2}{h} + \eta_1\right) < \left(\eps -
  \frac{2}{h} - \eta_1\right)^4
\end{equation*}
Additionally, we need to ensure that the events we have conditioned on occur with probability at
least $2/3$.  So, we want:
\begin{equation*}
\frac{500}{h\eta_2^2} +\frac{14}{h \gamma^2 \eta_1^2} + \frac{1}{\gamma^4 h} < \frac{1}{3}
\end{equation*}

One can check now that the following setting of parameters satisfies both of the above constraints:
$\gamma= \eps^2/100$, $h=(10/\eps)^{20}$, $\eta_1=(\eps/10)^8$, $\eta_2=(\eps/10)^4.$
\end{proof-of-theorem}

\section{Concluding Remarks and Open Problems}\label{sec:conclusion}

\begin{itemize}

\item The main open question raised here  (Question~\ref{ques:canpoly})  is
whether it is possible in general to obtain canonical testers for subspace-hereditary
properties with only a polynomial blow up in the query complexity.  Here, we show this to
be true for OCF, and \cite{BX10} showed the existence of a canonical tester with quadratic
blowup for the triangle-freeness property. On the other hand, there is some evidence to
the contrary also. Goldreich and Ron in \cite{GR11} proved a nontrivial gap between
canonical and non-canonical testers for graph properties. They showed that there exist hereditary
graph properties that can be tested using $\tilde{O}(\eps^{-1})$ queries but for which the
canonical tester requires $\tilde{\Omega}(\eps^{-3/2})$ queries.
Perhaps, this indicates that for subspace-hereditary properties also, there is a non-trivial, maybe even
super-polynomial in this case, gap between non-canonical and canonical testers.

\item As previously mentioned, \ocf\ is in fact the only monotone property characterized by
freeness from an infinite number of equations (of rank $1$). We briefly comment here on the equivalence
between all these properties.
It is easy to see that even-length equations can be handled trivially.
Suppose now that $\calp$ is  defined by freeness from all equations  of length belonging
to the infinite set of odd integers $S=\{k_1, k_2, \ldots\}$. Note that $\mbox{OCF}\subseteq
\calp$. Now suppose $k\not\in S$ and $k$ is odd, and let $k'$ be the smallest element of
$S$ such that $k\leq k'$. If $f\in \calp$ is not free of solutions to the length $k$
equation, then $f$ is not free of solutions to the  equation of length $k'$, since a
solution $(x_1, \ldots, x_k)$ to the former induces a solution $(x_1, \ldots, x_k, x_1,
x_1\ldots, x_1)$ to the latter.

\item Another open problem that arises is to characterize the
class of linear-invariant properties that  can be tested using $\poly(1/\eps)$ queries.
For monotone properties that can be characterized by freeness from solutions to a family
$\calf$ of equations, we conjecture that there is a sharp dichotomy given by whether $\calf$ is infinite or
finite. It follows from Theorem \ref{thm:ocf1} and the discussion in the previous item that when $\calf$ is infinite,
the query complexity is $\poly(1/\eps)$. When $\calf$ is finite and the property is nontrivial, then the property is
equivalent to being free of solutions to a single equation $x_1 + \cdots + x_k = 0$ for
some odd integer $k > 1$. In this case, we conjecture that the query complexity is
super-polynomial, although the current best lower bound is only slightly non-trivial:
$\Omega(1/\eps^{2.423})$ for testing triangle-freeness \cite{BX10}. For non-monotone
properties characterized by freeness from solutions to a family of equations, \cite{CSX11}
showed that $(C_3, 110)$-freeness can be testing using $O(1/\eps^2)$ queries (recalling
the notation in Section \ref{sec:intro}), but there is no systematic understanding at
present of when $\poly(1/\eps)$ query complexity is possible for larger equations or for
arbitrary intersections of such non-monotone properties.  For properties
characterized by freeness from solutions to a system of equations of rank greater than
one, even less is known.

\item Another open problems left open by our results is whether the ${\tilde
 O}(1/\eps^2)$ bound for odd-cycle-freeness is tight. This is indeed the case for
bipartiteness testing in graphs \cite{BT04}, but a direct analogue of their hard instances
does not seem to work in our case.

\item One could also ask Question \ref{ques:canpoly} for linear-invariant properties that are
not subspace-hereditary. Given a linear-invariant property $\calp$, we say that a tester
$T$ is canonical for $\calp$ if there exists a fixed linear-invariant property $\calp'$
(not necessarily the same as $\calp$) such that when
$T$ is given oracle access to a function $f: \F_2^n \to \zo$, it operates by choosing
uniformly at random a subspace $H \leq \F_2^n$ and accepting if and only if $f$
restricted to $H$ satisfies the property $\calp'$.  Notice that unlike the
subspace-hereditary case, the canonical tester now need not be one-sided.  The stronger
form of Question \ref{ques:canpoly} is whether it is the case that for every
linear-invariant property $\calp$, there exists a canonical tester for $\calp$ with query
complexity $\poly(q(n,\eps))$ whenever $\calp$ is testable with query complexity
$q(n,\eps)$ by some tester. Goldreich and Trevisan \cite{GoldreichTrevisan} showed the
existence of such a canonical tester with polynomial blowup for graph properties.

\end{itemize}

\bibliographystyle{alpha}
\bibliography{testing}
\ignore{
\appendix

\section{Missing details from the proofs of Theorem~\ref{thm:ocf1} and~\ref{thm:ocf2}}\label{app}

\section{Stuff that needs to be integrated or removed}

We recall the  Goldreich-Levin list-decoder that we make use of.
\begin{lemma}(Goldreich-Levin)
Given a function $f:\F_2^n \rightarrow \F_2$, and  $ \delta, \eps>0$, there exists a $O(\frac n {\eps^2}\log n \log{\frac{1}{\delta}})$ query algorithm
$\sc{GL}( \eps, \delta)$ that outputs a list $\call$ of size $O(\frac 1 {\eps^2})$ such that  with probability $1-\delta$
if  $|{\widehat f}(\alpha)|\geq \eps$ then $\alpha\in \call$.
\end{lemma}

\begin{fact} For any $\alpha\in \F_2^n$ and $\delta, \eps>0$, one can estimate ${\widehat f}(\alpha)$ within $\pm \eps$ with probability $1-\delta$, using $O(\frac{1}{\eps^2}\log \frac 1{ \delta})$ queries.
\end{fact}

\begin{theorem}\label{thm:2sided-tester}
The property of being odd-cycle-free is testable with 2-sided error and $O(\frac n {\eps^2} \log n)$ queries.
\end{theorem}
\begin{proof}
Let $f:\F_2^n\ra \F_2$ of density $\rho$, and $\eps>0.$ We can assume w.l.o.g. $\rho>\eps$.
 The following is a 2-sided tester for the  property of being odd-cycle-free:
 \begin{enumerate}
\item  compute an estimate $\tilde \rho$ for the density $\rho=\widehat f(0)$ within $\eps/100$ and confidence $1-\delta=99\%$
 \item  run $\sc{GL(\tilde \rho-\frac \eps {50}, \delta)}$ and let $\call$ be the list of the Fourier coefficients returned
 \item  for each $\alpha\in \call$ estimate  $\widehat f(\alpha)$ within $\frac \eps {100}$ and remove from $\call$ all such $\alpha$'s whose estimate is $> -\tilde\rho+\eps.$
 \item  accept only if  $|\call|\geq 1$ (namely, there exists some one non-zero Fourier coefficient whose value could be $-\rho.$)
 \end{enumerate}
\noindent{\em Analysis:} If $f$ is ocf then there exists $\alpha\not=0$ such that $\widehat f(\alpha)=-\rho$. Then the test will reject w.p. $<3\delta.$
 If $f$ is $\eps$ far from ocf, then all its Fourier coefficients are $>-\rho+2\eps$, and the test will again fail with probability $\leq 3\delta.$

\end{proof}
}
\end{document}